\title{Propositional dynamic logic for searching games with errors}
\author{Bruno \textsc{Teheux}}
\address{Mathematics Research Unit, FSTC, University of Luxembourg, 6, Rue Coudenhove-Kalergi, L-1359 Luxembourg, Luxembourg}
\email{bruno.teheux@uni.lu}
\thanks{The author was supported  by the internal research project F1R-MTHPUL-
12RDO2 of the University of Luxembourg.}
\subjclass[2010]{03B45,  03B70,  03B50. \\ \indent 2012 \emph{ACM Classification.} Theory of computation $\rightarrow$ Logic and verification}
\keywords{Propositional Dynamic Logic, many-valued logics, \textsc{\L ukasiewicz} logic, \textsc{Kripke} models, MV-algebras}
\newcommand{\Prop}{\mathsf{Prop}}
\newcommand{\Form}{\mathsf{Form}}
\newcommand{\model}[1]{\mathcal{#1}}
\renewcommand{\implies}{\rightarrow}
\newcommand{\struc}[1]{\langle #1 \rangle}
\newcommand{\Val}{\mathrm{Val}}
\newcommand{\lucas}{\mbox{\L}}
\newcommand{\logic}[1]{\mathsf{{#1}}}
\newcommand{\proves}{\vdash}
\newcommand{\var}[1]{\mathcal{#1}}
\newcommand{\free}{\mathcal{F}}
\newcommand{\card}[1]{\vert #1 \vert}
\newcommand{\poss}[1]{\langle #1 \rangle}
\newcommand{\necess}[1]{[#1]}
\newcommand{\fish}[1]{\mathrm{FL}(#1)} 
\newtheorem{thm}{Theorem}[section]
\newtheorem{cor}[thm]{Corollary}
\newtheorem{lem}[thm]{Lemma}
\newtheorem{fact}[thm]{Fact}
\newtheorem{prop}[thm]{Proposition}
\theoremstyle{definition}
\newtheorem{defn}[thm]{Definition}
\theoremstyle{remark}
\newtheorem{rem}[thm]{Remark}
\newtheorem{ex}[thm]{Example}
\numberwithin{equation}{section}
\begin{document}
\begin{abstract}
We investigate some finitely-valued generalizations of pro\-po\-sitional dynamic logic with tests. We start by introducing the $n+1$-valued \textsc{Kripke} models and a corresponding language based on a modal extension of \textsc{\L ukasiewicz} many-valued logic. 
We illustrate the definitions by providing a framework for an analysis of the  \textsc{Rényi}~-~\textsc{Ulam} searching game with errors.

Our main result is the axiomatization of the theory of the $n+1$-valued  \textsc{Kripke} models.  This result is obtained through filtration of the canonical model of the smallest $n+1$-valued propositional dynamic logic.  
\end{abstract}
\maketitle

\section{Introduction}
Propositional dynamic logic ($\logic{PDL}$) is  a multi-modal logic designed to reason about programs.  The general idea behind the semantic of this system  is the following.  Program states are gathered in a set $W$. Any program $\alpha$ is encoded by its input/output relation on $W$. Programs are built from atomic ones and test operators using regular operations.
Their associated relations are defined as to respect this algebra of programs. The goal is to provide a framework for formal verification through input/output specifications.

Since its introduction by  \textsc{Fischer} and \textsc{Ladner} in \cite{Ladner1979}, the scope of dynamic logic has widened to many other areas such as game theory (see \cite{Fitting2011, Harrenstein2003, Pauly2007}), epistemic logic (see \cite{vanDitmarsch2007}) or natural language (see \cite{vanBenthem2010}). The subject is under constant and active development (see \cite{Benevides2011, Benevides2010,  Doberkat2012, Leivant2008} for example) and we refer to \cite{Harel2000} for an introductory monograph.

Informally, $\logic{PDL}$ is a mixture of modal logic and algebra of regular programs. Recently, some authors have considered generalizations of modal logics to many-valued realms (see \cite{Bou2011,Caicedo2010, Fit2, Fit3, Fit1, Ost88}). These modal many-valued systems can naturally be considered as building blocks of many-valued  generalizations of $\logic{PDL}$. In other words,  these developments raise the issue of  describing the systems obtained by adding a many-valued flavor to the modal logic used to define $\logic{PDL}$. Such many-valued propositional dynamic logics would provide a language  to state correctness criteria in the form of  input/output specifications that could be \emph{partly} satisfied.

We address this problem for the modal extensions of the $n+1$-valued  \textsc{\L ukasiewicz} logics (see \cite{Luka1,Luka2,Luka3}) studied in \cite{Hansoul2006, Teheux2012}. Hence,  the truth values of the propositions  range in a set of finite cardinality $n+1$ where $n \geq 1$. 


Our starting point is the definition of a language (with test operator) for such generalizations and their corresponding $n+1$-valued \textsc{Kripke} models. In these models, relations associated to programs are crisp and valuation maps are many-valued. As an illustration of the new possibilities allowed by this language, we explain how it can be used to construct a dynamic model for formal verification of strategies  of  the \textsc{Rényi}~-~\textsc{Ulam} searching game with errors.  

The goal of this paper is the characterization of the theory of these $n+1$-valued \textsc{Kripke} models (\emph{i.e.}, the set of formulas that are true in any model). In this view, Theorem  \ref{thm:main} is our main result. It gives an axiomatization of this theory through  an $n+1$-valued  propositional dynamic deductive system that we denote by $\logic{PDL}_n$.

This result is obtained by the way of the canonical model. 
This construction defects to be a `standard'  $n+1$-valued \textsc{Kripke} model and we need a filtration result to obtain Theorem  \ref{thm:main}. 

The construction of the canonical model for $\logic{PDL}$ is algebraic in disguise. This model is built upon the set of the maximal filters of the \textsc{Lindenbaum}~-~\textsc{Tarski} algebra of $\logic{PDL}$ which is a multi-modal Boolean algebra. Naturally, the canonical model for $\logic{PDL}_n$ also has an algebraic flavor. The system $\logic{PDL}_n$ is based on modal extensions of \textsc{\L ukasiewicz} $n+1$-valued logic. Hence,  MV-algebras~-~which are the algebraic counterpart of \textsc{\L ukasiewicz} logics~-~replace Boolean algebras in this setting.

The techniques used in the proofs in this paper are generalizations of the corresponding techniques for $\logic{PDL}$. 
It is worth noting that by considering $n=1$, our results boil down to the existing ones for $\logic{PDL}$.

This paper is organized as follows. In the next section we introduce some many-valued generalizations of the language and models of $\logic{PDL}$. Section \ref{sec:ulam} provides an example that illustrates the possibilities offered by these generalizations. Section \ref{sec:systems} is devoted to the development of a sound deductive system $\logic{PDL}_n$ for the $n+1$-valued \textsc{Kripke} models.  The many-valued forms of the intrinsic axioms of $\logic{PDL}$, such as the induction axiom, are discussed when needed. Eventually, in section \ref{sec:compl} we prove the deductive completeness of $\logic{PDL}_n$ with respect to the $n+1$-valued \textsc{Kripke}-models  (proof of the filtration lemma is provided in Appendix). In order to keep the paper self-contained, we recall the necessary definitions and results about algebras of regular programs and MV-algebras.

%

\section{Many-valued \textsc{Kripke} models for dynamic logics}\label{sec:model}
The starting point of the developments of this paper is a generalization to an $n+1$-valued realm of the definitions of the propositional dynamic language and the \textsc{Kripke} models.

Let us denote by $\Pi_0$ a nonempty set of atomic programs (denoted by $a, b, \ldots$) and by $\Prop$ a countable set of propositional variables (denoted by $p, q, \ldots$). The sets $\Pi$ of programs and $\Form$ of well
formed formulas are given by the following \textsc{Backus}-\textsc{Naur} forms (where $\phi$ are formulas and $\alpha$ are programs) :
\begin{equation}
\begin{array}{c}
\phi::=p\mid0\mid  \neg \phi \mid \phi \implies \phi\mid [\alpha]\phi\\
\alpha::=a\mid\phi?\mid\alpha; \alpha\mid\alpha\cup\alpha\mid\alpha^*.
\end{array}
\end{equation}


To extend the definition of a \textsc{Kripke} model  to a $[0,1]$-valued realm, we use \textsc{\L ukasiewicz} interpretation $\implies^{[0,1]}$ and $\neg^{[0,1]}$ of the binary connector $\implies$ and the unary connector $\neg$ respectively. These maps are defined on $[0,1]$ by
\begin{equation}\label{eqn:defnluka}
\neg^{[0,1]} x =1-x \quad \mbox{ and }  \quad
x \implies^{[0,1]} y = \min(1-x+y,1).
\end{equation}

Hence,  (\ref{eqn:defnluka}) allows us to define in the obvious inductive way the \emph{$[0,1]$-interpretation} $\tau^{[0,1]}$ of any well formed formula $\tau$  constructed only with propositional variables  and connectives $\neg$ and $\implies$ (if $\tau$ has $k$ propositional variables then $\tau^{[0,1]}:[0,1]^k \rightarrow [0,1]$).
To shorten notation, when no confusion is possible we usually denote by $\neg$, $\implies$ and $\tau$ the maps $\neg^{[0,1]}$,  $\implies^{[0,1]}$ and $\tau^{[0,1]}$ respectively.

The results we are interested in are related to finitely-valued  \textsc{\L ukasiewicz} logics. It means that we only allow valuations of propositional variables in the finite subsets of $[0,1]$ that are closed for the connectors
$\neg$ and $\implies$ (and that contain $0$ and $1$). It is not difficult to
realize that these are exactly the subsets $\lucas_n=\{\frac{i}{n}\mid 0 \leq i \leq n\}$ where $n$ is an integer greater than 1 (see \cite{Cign} for details). If $\tau$ is a formula constructed from $k$ propositional variables by using only connectives $\neg$ and $\implies$, we denote by $\tau^{\mbox{\scriptsize \L}_n}$ the restriction of $\tau^{[0,1]}$ to $\lucas_n^k$.

Recall that if $R$ and $R'$ are unary relations on $W$ then the \emph{composition} $R \circ R'$ is defined by $R\circ R'=\{(u,w)\in W\times W \mid \exists v \in W (uRv \ \& \ vR'w)\}$. Moreover, the \emph{$k$-th power}  $R^k$ of $R$ is inductively defined by $R^0=\{(u, u) \mid u \in W\}$ and $R^{k+1}=R\circ R^k$ for $k \in \omega$.

\begin{defn}\label{defn:model}
  An \emph{$n+1$-valued \textsc{Kripke} model}  $\model{M}=\struc{W, R, \Val}$ is given by a nonempty
  set $W$, a map $R: \Pi_0 \rightarrow 2^{W \times W}$ that assigns a binary
  relation $R_a$ to any $a \in \Pi_0$ and a map $\Val: W
  \times \Prop \rightarrow \lucas_n$ that assigns a truth value  to any propositional variable
  $p$ of $\Prop$ in any world $w$ of $W$. 

The maps $R$ and $\Val$ are
extended by mutual induction to formulas and programs by the following
rules:
\begin{enumerate}
\item $R_{\alpha;\beta}=R_\alpha \circ R_\beta$;
\item $R_{\alpha \cup \beta}=R_\alpha \cup R_\beta$;
\item $R_{\psi?}=\{(u,u)\mid \Val(u,\psi)=1\}$;
\item\label{rule:closure}  $R_{\alpha^*}=\bigcup_{k\in \omega}(R_{\alpha})^k$;
\item $\Val(w,0)=0$;
\item \label{rule:implie}$\Val(w, \phi \implies \psi)=\Val(w, \phi) \implies^{[0,1]} \Val(w, \psi)$;
\item\label{rule:uhjn} $\Val(w, \neg \psi)=\neg^{[0,1]} \Val(w, \psi)$;
\item \label{rule:box}$\Val(w, [\alpha] \psi)=\bigwedge \{\Val(v, \psi)\mid {(w,v) \in R_\alpha}\}$.
\end{enumerate}

\end{defn}
Throughout the paper, $n$ stands for a fixed integer greater or equal to 1.  We sometimes call  \emph{\textsc{Kripke} model} an $n+1$-valued \textsc{Kripke} model.

Clearly, we intend to interpret the operator `;' as the concatenation program operator,
`$\cup$' as the alternative program operator and the operator `$*$' as the \textsc{Kleene} program operator. Hence,
if $\alpha$ and $\beta$ are programs, the connective $[\alpha]$ is read `after any execution of $\alpha$', the connective
$[\alpha \cup \beta]$ is read `after any execution of $\alpha$ or $\beta$', the connective $[\alpha;\beta]$ is read `after any 
execution of $\alpha$ followed by an execution of $\beta$' and $[\alpha^*]$ is read `after an undetermined number of executions of  $\alpha$' (rule (\ref{rule:closure}) means that $R_{\alpha^*}$ is defined as the transitive and reflexive closure of $R_\alpha$).

\begin{defn}\label{defn:true} If $w$ is a world of a \textsc{Kripke} model $\model{M}$  and if $\phi$ is a formula such that $\Val(w, \phi)=1$, we
write $\model{M}, w \models \phi$ and say that \emph{$\phi$ is true in $w$}. If
$\phi$ is a formula that is true in each world of a model $\model{M}$ then $\phi$ is
\emph{true in $\model{M}$}. A formula that is true in every \textsc{Kripke}
model is called a \emph{tautology}.
\end{defn}

We  use of the well established following abbreviations for any $\phi, \psi \in \Form$: the formula $\phi \vee \psi$ stands for $(\phi \implies \psi) \implies \psi$,  the formula $\phi \wedge \psi$
for $\neg (\neg \phi \vee \neg \psi)$, the 
formula $\phi \oplus \psi$ for $\neg \phi \implies \psi$, the formula
$\phi \odot \psi$ for $\neg (\neg \phi \oplus \neg \psi)$,  the formula $\phi \leftrightarrow \psi$  for $(\phi \implies \psi) \odot (\psi \implies \phi)$. Moreover, we assume associativity of $\oplus$ and $\odot$ (this is justified by associativity of $\oplus^{[0,1]}$ and $\odot^{[0,1]}$). Hence,  the formula $k.\psi$ and $\psi^k$ ($k \in \omega$) stands respectively  for $\psi \oplus \cdots \oplus \psi$  and $\psi \odot \cdots \odot \psi$ where the factor $\psi$ is repeated $k$ times. We adopt the convention that $\psi^0=1$ and $0.\psi=0$. It is easily checked that the resulting $[0,1]$-interpretations of these abbreviations are the following:
\begin{multicols}{2}
\begin{enumerate}
\item $x \oplus^{[0,1]} y=\min\{x+y,1\}$,
\item $x\odot^{[0,1]}y=\max\{x+y-1,0\}$,
\item $x \leftrightarrow^{[0,1]} y=1-\mid x-y \mid$,
\item $ x \vee^{[0,1]} y=\max\{x,y\}$,
\item $x \wedge^{[0,1]} y=\min\{x,y\}$.
\end{enumerate}
\end{multicols}
In \textsc{\L ukasiewicz} logic, connectors $\oplus$ and $\odot$ are respectively called \emph{strong disjunction} and  \emph{strong conjunction} because the equations $(p \odot q)^{[0,1]}\leq(p \wedge q)^{[0,1]}$ and $(p \oplus q)^{[0,1]}\geq(p \vee q)^{[0,1]}$ are satisfied. Recall that $\odot^{[0,1]}$ is a left-continuous t-norm with residuum $\neg^{[0,1]}$. This means that equation
\begin{equation}
\big((p \odot(p\implies q)) \implies q\big)^{[0,1]}=1,
\end{equation}
which can be considered as the fuzzy version of \emph{modus ponens}, is satisfied.
  It should be noted that   $(p^{k+1})^{[0,1]}\neq(p^k)^{[0,1]}$ for any $k\in \omega$ but  $(p^{k+1})^{\mbox{\scriptsize \L}_n}=(p^k)^{\mbox{\scriptsize \L}_n}$  for every $k\geq n$. Finally, the formula
$\poss{\alpha} \phi$  stands  for $\neg \necess{\alpha} \neg \phi$.


Moreover, we write $\model{M}, w \models \Gamma$ (respectively $\model{M}\models \Gamma$) if $\Gamma$ is a set of formulas that are true in $w$ (respectively in $\model{M}$).

\begin{prop}\label{prop:tauto}
The following formulas are tautologies for any programs $\alpha$ and $\beta$ (where $n$ is the integer that we have fixed to define $\lucas_n$).
\begin{multicols}{2}
 \begin{enumerate}
 \item\label{item:kjh01} $\necess{\alpha \cup \beta} p \leftrightarrow \necess{\alpha} p \wedge
     \necess{\beta} p$.\label{tauto:dgdks}
   \item\label{item:kjh02}  $\necess{\alpha ; \beta} p \leftrightarrow \necess{\alpha} \necess{\beta} p$.
   \item $\poss{\alpha \cup\beta} p \leftrightarrow \poss{\alpha} p \vee
     \poss{\beta} p$.
   \item  $\poss{\alpha ; \beta} p \leftrightarrow \poss{\alpha}
     \poss{\beta} p$.
 \item\label{item:kjh03} $\necess{q?}p \leftrightarrow (\neg q^n \vee p)$.
   \item $\necess{\alpha^*} p \implies p$.
   \item $p \implies \poss{\alpha^*} p$.
   \item $\necess{\alpha^*} p \implies \necess{\alpha} p$.
   \item $\poss{\alpha} p \implies \poss{\alpha^*} p$.
   \item\label{item:kjh03bis} $\necess{\alpha^*} p \leftrightarrow (p \wedge \necess{\alpha}
   \necess{\alpha^*} p)$.
   \item$\poss{\alpha^*} p \leftrightarrow (p \vee \poss{\alpha}
   \poss{\alpha^*} p)$.
   \item\label{item:kjh04} $(p \wedge \necess{\alpha^*} (p \implies \necess{\alpha}
     p)^n) \implies \necess{\alpha^*} p$.
\item\label{item:kjh05} $[\alpha^*]p\implies[\alpha^*][\alpha^*]p$.
\newcounter{temp}
\setcounter {temp}{\value{enumi}}
\end{enumerate}
\end{multicols}

Moreover, the following formulas are tautologies for any program $\alpha$, because they are tautologies
of the modal $n+1$-valued \textsc{\L ukasiewicz} logic.
\begin{enumerate}
\setcounter{enumi}{\value{temp}}
\item\label{item:K} $\necess{\alpha} (p\implies q) \implies (\necess{\alpha} p \implies
  \necess{\alpha} q)$.
 \item $\necess{\alpha} (p \wedge q)\leftrightarrow\necess{\alpha} p \wedge \necess{\alpha}
   q$ and $\poss{\alpha} (p \vee q)\leftrightarrow\poss{\alpha} p \vee \poss{\alpha} q$.
\item $(\necess{\alpha}p \vee \necess{\alpha}q) \implies \necess{\alpha}(p \vee q)$.
\item $([\alpha] \phi \odot \poss{\alpha} \psi) \implies \poss{\alpha}(\phi \odot \psi)$.
\item\label{tem:tauto} $\poss{\alpha}(\phi \odot \psi)\implies (\poss{\alpha}\phi \odot \poss{\alpha} \psi)$
\item\label{item:tau_respect} If $\tau(q)$ is a formula with a single variable $q$  which is constructed only with the connectors
  $\neg$ and $\implies$ and whose  $[0,1]$-interpretation is increasing then $\tau(\necess{\alpha} p) \leftrightarrow
  \necess{\alpha} \tau(p)$ and $\tau(\poss{\alpha} p) \leftrightarrow
  \poss{\alpha} \tau(p)$.
 \end{enumerate}
\begin{ex}\label{ex:nonindu}
It is worth noting that the formula $\big(p \wedge \necess{\alpha^*}(p\implies \necess{\alpha}p)\big)\implies \necess{\alpha^*}p$ is not a tautology. It would have been the most natural many-valued generalization of  the Induction Axiom of $\logic{PDL}$. As a counterexample, consider the model $\model{M}=\struc{\{u,v\}, R, \Val}$ where $R_a=\{(u,v)\}$, $\Val(u,p)=3/4$ and $\Val(v,p)=1/4$. It follows that on the one hand $\Val(u,\necess{a^*}p)=\Val(u,p)\wedge \Val(v,p)=1/4$. On the other hand, we obtain successively
\begin{eqnarray}
\Val(u, \necess{\alpha^*}(p\implies \necess{\alpha}p))& = &  \Val(u, p \implies \necess{a}p)\wedge\Val(v, p \implies \necess{a}p)\\
& = & 1/2 \wedge  1\\
& = & 1/2.
\end{eqnarray}
It follows that $\Val(u, p \wedge \necess{\alpha^*}(p\implies \necess{\alpha}p))=3/4\wedge 1/2=1/2 \not=1/4=\Val(u,\necess{a^*}p)$.
\end{ex}
\end{prop}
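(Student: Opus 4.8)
The plan is to verify each formula directly from the semantics, relying throughout on the two identities $\Val(w,\necess{\alpha}\phi)=\bigwedge\{\Val(v,\phi)\mid (w,v)\in R_\alpha\}$ and, by the duality $\poss{\alpha}\phi=\neg\necess{\alpha}\neg\phi$, $\Val(w,\poss{\alpha}\phi)=\bigvee\{\Val(v,\phi)\mid (w,v)\in R_\alpha\}$. Since $x\implies^{[0,1]}y=1$ exactly when $x\le y$ and $x\leftrightarrow^{[0,1]}y=1-|x-y|$, checking that an implication (resp.\ a biconditional) is a tautology amounts to proving the pointwise inequality $\Val(w,\phi)\le\Val(w,\psi)$ (resp.\ the pointwise equality $\Val(w,\phi)=\Val(w,\psi)$) at every world $w$ of every model. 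I would record at the outset the crispness fact that, for $x\in\lucas_n$, the value $(x^n)^{\mbox{\scriptsize \L}_n}$ equals $1$ when $x=1$ and $0$ otherwise, which is immediate from $x\odot^{[0,1]}y=\max\{x+y-1,0\}$.

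The items then split into four groups. Items (1)--(4) follow at once from the defining clauses $R_{\alpha\cup\beta}=R_\alpha\cup R_\beta$ and $R_{\alpha;\beta}=R_\alpha\circ R_\beta$, using that an infimum over a union is the infimum of the infima and that an infimum over a composite relation is an iterated infimum (and dually for suprema). Item (5) follows from $R_{q?}=\{(u,u)\mid \Val(u,q)=1\}$: the value $\Val(w,\necess{q?}p)$ is $\Val(w,p)$ when $\Val(w,q)=1$ and is $1$ (empty meet) otherwise, which is precisely the value of $\neg q^n\vee p$ by the crispness fact. Items (6)--(9) and (13) reduce to the relational inclusions $\mathrm{Id}\subseteq R_{\alpha^*}$, $R_\alpha\subseteq R_{\alpha^*}$ and $R_{\alpha^*}\subseteq R_{\alpha^*}\circ R_{\alpha^*}$, since enlarging the accessibility set can only lower a box and raise a diamond; items (10) and (11) come from the decomposition $R_{\alpha^*}=\mathrm{Id}\cup(R_\alpha\circ R_{\alpha^*})$.

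The heart of the proof is the induction axiom (12). Here the factor $(\cdot)^n$ is essential: by the crispness fact, $\Val(v,(p\implies\necess{\alpha}p)^n)\in\{0,1\}$ and equals $1$ exactly when $\Val(v,p)\le\Val(v,\necess{\alpha}p)$, that is, when $\Val(v,p)\le\Val(u,p)$ for every $u$ with $(v,u)\in R_\alpha$. Consequently $\Val(w,\necess{\alpha^*}(p\implies\necess{\alpha}p)^n)$ is itself $0$ or $1$; if it is $0$ the antecedent of (12) is $0$ and there is nothing to prove, so I would assume it is $1$, meaning that $\Val(v,p)\le\Val(u,p)$ holds along every $R_\alpha$-edge issued from a world $v$ reachable from $w$ by $R_{\alpha^*}$. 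Given any $v$ with $(w,v)\in R_{\alpha^*}$, choose $k$ and a path $w=u_0,\dots,u_k=v$ witnessing $(w,v)\in (R_\alpha)^k$; every prefix keeps us inside the $R_{\alpha^*}$-image of $w$, so the inequality propagates along the path to give $\Val(w,p)\le\Val(v,p)$. Since this image is nonempty (it contains $w$), taking the infimum over all such $v$ yields $\Val(w,p)\le\Val(w,\necess{\alpha^*}p)$, which is exactly what the antecedent $\Val\big(w,\,p\wedge\necess{\alpha^*}(p\implies\necess{\alpha}p)^n\big)=\Val(w,p)$ requires.

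Finally, items (14)--(19) are the modal fragment and hold pointwise over the accessible worlds: (14) combines the fuzzy modus ponens inequality $(x\implies y)\odot x\le y$ with monotonicity of $\odot$; (15) and (16) are the distributivity laws relating $\bigwedge,\bigvee$ with $\wedge,\vee$; (17) and (18) exploit that $\odot^{[0,1]}$ is a left-continuous t-norm and hence distributes over arbitrary suprema; and (19) uses that an increasing $\{\neg,\implies\}$-term preserves the top and bottom and, being monotone on the chain $\lucas_n$, commutes with the meets and joins computing $\necess{\alpha}$ and $\poss{\alpha}$. These may alternatively be quoted from the modal $n+1$-valued \textsc{\L ukasiewicz} logic of \cite{Hansoul2006, Teheux2012}. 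I expect the only genuine difficulty to be (12): as already signalled by Example \ref{ex:nonindu}, the naive induction axiom fails, and one must exploit the crispness of $(\cdot)^n$ to convert the premise into a genuinely Boolean \emph{monotone-along-$\alpha$} condition before running the path induction.
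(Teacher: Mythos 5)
Your verification is correct and is, in substance, exactly the argument the paper leaves implicit: the paper states Proposition \ref{prop:tauto} without proof, treating items (1)--(13) as routine world-by-world semantic checks and quoting items (14)--(19) from the modal $\lucas_n$-valued \L ukasiewicz logic of \cite{Hansoul2006,Teheux2012}, which is precisely your two-pronged plan. Your identification of (12) as the only delicate point, and your use of the crispness of $x\mapsto x^n$ on $\lucas_n$ to turn the premise into a Boolean monotonicity condition that propagates along $R_\alpha$-paths, is the right proof and is exactly the mechanism that Example \ref{ex:nonindu} shows to be indispensable. Your care with top/bottom preservation in item (19) (needed when a world has no $R_\alpha$-successors, so that the box is an empty meet) is also sound, given that ``increasing'' excludes constant terms.

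One directional slip should be fixed. For item (13) you invoke the inclusion $R_{\alpha^*}\subseteq R_{\alpha^*}\circ R_{\alpha^*}$; but since enlarging the accessibility relation can only \emph{lower} a box, that inclusion yields the converse implication $\necess{\alpha^*}\necess{\alpha^*}p\implies\necess{\alpha^*}p$, not item (13). What (13) needs is transitivity, i.e.\ $R_{\alpha^*}\circ R_{\alpha^*}\subseteq R_{\alpha^*}$, which follows from $(R_\alpha)^j\circ(R_\alpha)^k=(R_\alpha)^{j+k}$ and makes the meet defining $\necess{\alpha^*}\necess{\alpha^*}p$ run over a subset of the worlds entering $\necess{\alpha^*}p$. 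Both inclusions are in fact true (the two relations coincide, $R_{\alpha^*}$ being reflexive and transitive), so the repair is one line, but as written the cited inclusion proves the wrong direction.
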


\section{An illustration, the \textsc{R\' enyi} - \textsc{Ulam} game }\label{sec:ulam}

We can use the previously defined models to provide a framework for an
analysis  of the famous \textsc{Rényi}~-~\textsc{Ulam}
game. \textsc{Ulam}'s formulation of the game in \cite{Ulam}, which was previously and
independently introduced by \textsc{Rényi}, is the following:
\begin{quotation}
Someone thinks of a number between one and one million (which is just less
than $2^{20}$). Another person is allowed to ask up to twenty questions, to
each of which the first person is supposed to answer only yes or no. Obviously
the number can be guessed by asking first: is the number in the first
half-million? and again reduce the reservoir of numbers in the next question
by one-half, and so-on. Finally, the number is obtained in less than $\log_2 1
000 000$. Now, suppose that one were allowed to lie once or twice, then how
many questions would one need to get the right answer? 
\end{quotation}

Many researchers (mainly computer scientists) have  focused their attention on
that game since the publication of \textsc{Ulam}'s book \cite{Ulam}.  The success 
of the game is due to its connections with the theory of error-correcting 
codes with feedbacks in a noisy channel and the complexity of the problem of
defining optimal strategies for the game. We refer to \cite{Pelc} for an overview
of the literature about the \textsc{Rényi}~-~\textsc{Ulam} game.

The game has also been considered by many-valued logicians as a way to give a
concrete interpretation of \textsc{\L ukasiewicz} finitely-valued calculi and
their associated algebras (see \cite{Mundici1992}). Mathematicians have modeled
the game by coding algebraically \emph{questions} and
\emph{answers}. We recall this model, which is due to
\textsc{Mundici}, and then build a dynamic layer upon it in order  to model the interactions
between the two gamers.

\subsection{Algebraic approach of the states of knowledge}
We call the first gamer (the one who chooses a number and can lie) Pinocchio,
and the second gamer Geppetto. Let us denote by $M$ the \emph{search space}\index{search space},
\emph{i.e.}, the finite set of integers (or whatever) in which Pinocchio can pick up
his number. Let us also assume that Pinocchio can lie $n-1$ times. 

We set up a way to algebraically encode the information
defined by Pinocchio's answers, \emph{i.e.}, to model Geppetto's state of knowledge of
the game after each of Pinocchio's answers. This can be done by considering
at step $i$ of the game (after $i$ answers) the map
$r_i: M \rightarrow \{0,1, \ldots, n\}
$
where $r_i(m)$ is the number of the $i$ previous answers that refute the element $m$
of $M$ as Pinocchio's number. Indeed, once $r(m)=n$, since Pinocchio is
allowed to lie $n-1$ times, Geppetto can safely conclude that $m$ is not the
`right' number. Hence, the game ends once Geppetto encodes its knowledge by
a map $r$ which is equal to $n$ in any element $m$ of $M$ but in the searched number. 


In order to introduce \textsc{\L ukasiewicz} language in the interpretation of the game, we 
consider an equivalent representation of Geppetto's states of knowledge. This approach  was introduced in \cite{Mundici1992}.
\begin{defn}
 A \emph{state of knowledge}\index{state of knowledge} is a map $f:M \rightarrow \lucas_n$. The \emph{state of
 knowledge $f$ at some step of the game} is defined
 by $f(m)=1-\frac{r(m)}{n}$ where $r(m)$ denotes for any $m$ in $M$ the number of Pinocchio's
 answers that refute $m$ as the searched number.
\end{defn}

Hence, informally speaking, if $f$ is a state of knowledge at some step of the game, the number
$f(m)$ can be viewed for any $m$ in $M$ as the relative distance between $m$ and the set of the
elements of $M$ that can be safely discarded as inappropriate.
\subsection{Questions and answers}
 Note that during the game any question  is equivalent to a
question of the form `Does the searched number belong to $Q$?' for a subset
$Q$ of the search space $M$. Hence, for the remainder of  this section, we denote any
question by its associated subset $Q$ of $M$.

Let us assume that Geppetto has reached the state of knowledge $f$ and that he
asks question $Q$.  What is the state of knowledge $f'$ of the game after
Pinocchio's answer? If Pinocchio answers positively (`Yes, the number
belongs to $Q$') then Gepetto increments $r(m)$ by one (if necessary) for any $m$ in $M \setminus Q$  since a positive answer to $Q$ is
equivalent to a negative answer to $M\setminus Q$, \emph{i.e.},
\begin{equation}
f':M \rightarrow \lucas_n= m \mapsto \left\{
  \begin{array}{ll}
    f(m) & \mbox{if } m \in Q\\
    \max \{f(m)-\frac{1}{m}, 0\} & \mbox{if } m \in M \setminus Q.  \end{array}\right.
\end{equation}
On the contrary, if Pinocchio answers negatively to $Q$, then  Gepetto
increments $r(m)$ by one (if necessary) for any $m$ in $Q$, \emph{i.e.}, 
\begin{equation}
f':M \rightarrow \lucas_n= m \mapsto \left\{
  \begin{array}{ll}
    f(m) & \mbox{if } m \in M \setminus Q\\
    \max \{f(m)-\frac{1}{m},0\} & \mbox{if } m \in Q.  \end{array}\right.
\end{equation}
This line of argument justifies the following definition.

\begin{defn}
  If $Q$ is a subset of $M$, the \emph{positive answer} to $Q$ is the map
\[
f_Q:M \rightarrow \{\frac{n-1}{n}, 1\}:m \mapsto \left\{
  \begin{array}{ll}
    1 & \mbox{if } m \in Q \\ 
   \frac{n-1}{n} & \mbox{if } m \in M\setminus Q.
  \end{array}
\right.
\]
The \emph{negative answer} to $Q$ is the positive answer $f_{M\setminus Q}$ to
$M\setminus Q$.
\end{defn}

We can thus encode algebraically any of Pinocchio's answers. Recall that the interpretation of the binary connector $\odot$ on $[0,1]$ is defined by $x \odot^{[0,1]} y=\max(x+y-1,0)$.
\begin{fact}
  Assume that Geppetto has reached the state of knowledge $f$ and that he asks
  question $Q$. After Pinocchio's answer to $Q$, the stage of knowledge $f'$ of the
  game is $f\odot f_Q$ if Pinocchio's answer is positive and $f \odot
  f_{M\setminus Q}$ if it is negative.
\end{fact}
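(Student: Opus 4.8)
The plan is to prove the identity $f' = f \odot f_Q$ (and its negative counterpart) \emph{pointwise} on $M$, by evaluating both sides at an arbitrary $m \in M$ and comparing with the explicit update rule for $f'$ described just above the statement. Recall that $x \odot^{[0,1]} y = \max(x+y-1,0)$, so for each $m$ it suffices to compute $f(m) \odot^{[0,1]} f_Q(m)$ and to check that it agrees with $f'(m)$. Before doing so I would observe that the right-hand side is again a state of knowledge, i.e. a map into $\lucas_n$, since $\lucas_n$ is closed under $\odot$; this guarantees that the computation stays inside the intended value set and that $f \odot f_Q$ is a legitimate state of knowledge.

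For the positive answer I would split into the two cases dictated by the definition of $f_Q$. If $m \in Q$ then $f_Q(m) = 1$, whence $f(m) \odot^{[0,1]} 1 = \max(f(m),0) = f(m)$ (using $f(m) \geq 0$), which matches the clause $f'(m) = f(m)$ for $m \in Q$. If $m \in M \setminus Q$ then $f_Q(m) = \frac{n-1}{n}$, whence $f(m) \odot^{[0,1]} \frac{n-1}{n} = \max(f(m) - \frac1n, 0)$, which matches the clause $f'(m) = \max\{f(m) - \frac1n, 0\}$ for $m \in M \setminus Q$. The negative answer is then immediate by symmetry: by definition the negative answer to $Q$ is the positive answer $f_{M \setminus Q}$ to $M \setminus Q$, so the same computation with the roles of $Q$ and $M \setminus Q$ interchanged yields $f' = f \odot f_{M \setminus Q}$.

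There is no serious obstacle here; the content of the statement is exactly the observation that \textsc{\L ukasiewicz} strong conjunction with the constant $\frac{n-1}{n}$ implements `decrement $r(m)$ by one, capped at $n$'. The only point that deserves a word of justification is the role of the truncation $\max(\cdot,0)$: since $f(m) = 1 - \frac{r(m)}{n}$ with $r(m) \in \{0,\dots,n\}$, incrementing $r(m)$ by one \emph{only if necessary} (that is, never beyond $n$) corresponds precisely to replacing $f(m)$ by $\max(f(m) - \frac1n, 0)$, and the clamping to $0$ in the definition of $\odot^{[0,1]}$ is exactly what encodes the `if necessary' clause. Once this correspondence is recorded, the two cases above close the proof.
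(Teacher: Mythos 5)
Your proof is correct, and it takes the only natural route: the paper states this Fact without any proof, treating it as immediate from the preceding update rules and the definition of $f_Q$, and your pointwise case split ($m \in Q$ versus $m \in M\setminus Q$) together with the computation $f(m)\odot^{[0,1]} \frac{n-1}{n} = \max(f(m)-\frac{1}{n},0)$ is exactly the routine verification the paper leaves implicit. Note in passing that the paper's displayed update rule contains a typo ($\max\{f(m)-\frac{1}{m},0\}$ should read $\max\{f(m)-\frac{1}{n},0\}$), and your version silently uses the correct form, which is the one consistent with $r(m)$ being incremented by one.
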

\subsection{A dynamic layer}
Roughly speaking, we have modeled the game in a static way. There
is no structure  to model the possible \emph{sequences} of states of games. We provide such a structure through the Question/Answer relations on the set of the states of knowledge. 
The  atomic programs are the possible questions, \emph{i.e.} $\Pi_0=2^M$. The set
of propositional variables $\{p_m \mid m \in M\}$ that are relevant to the
problem is made of a variable $p_m$ for any $m$ in $M$ that can be read as
`$m$ is far from the set of rejected element'
or `the relative distance between $m$ and the set of rejected elements is'.

\begin{defn}
  The model of the \textsc{Rényi}~-~\textsc{Ulam} game with search space $M$
  and $n-1$ lies is the $n+1$-valued \textsc{Kripke} model 
  $\model{M}=\struc{\lucas_n^M, R, \Val}$
where
\begin{enumerate}
\item for any $Q$ in $2^M$, the relation $R_Q$ contains $(f,f')$ if $f'=f\odot
  f_Q$ or if $f'=f\odot f_{M\setminus Q}$,
\item for any $m$ in $M$ and any $f$ in $\lucas_n^M$, we set $\Val(f,p_m)=f(m)$.
\end{enumerate}
\end{defn}

This model provides a way to interpret any run of the game as a path from the
initial state $f:m \mapsto 1$ to any final winning state.

\begin{ex}
Examples of formulas that state correctness specifications for `honest' sequences of states of knowledge
include the following. We denote by $\tau_{i/n}(p)$  a formula whose interpretation on $\lucas_n$ is valued in $\{0,1\}$ and satifies $\tau_{i/n}^{{\tiny\lucas}_n}(x)=1 \iff x\geq i/n$. See Definition \ref{defn:ghu} for a formal definition. 
\begin{enumerate}
\item $[Q]p_m \implies p_m$
\item $\tau_{\frac{i}{n}}(p_m) \implies [Q;M\setminus Q] \tau_{\frac{i-2}{n}}(p_m)$ (if we agree that $\tau_{\frac{i-2}{n}}(p_m)=1$ if $i-2 \leq 0$).
\end{enumerate}
\end{ex}

As mentioned in the introduction, it is not the purpose of this paper to push further the investigation of the new possibilities allowed by the $n+1$-valued \textsc{Kripke} models. Nevertheless, we give some ideas of possible applications in section \ref{sec:concl}.

\section{$n+1$-valued propositional dynamic logics}\label{sec:systems}

We aim to provide a set of rules that allow to syntacticly   generate the theory of the $n+1$-valued \textsc{Kripke} models defined in section \ref{sec:model}. The underlying modal system on which we base the following definition is the modal $\lucas_n$-valued logic introduced in \cite{Hansoul2006,Teheux2012}.

\begin{defn}\label{defn:logic}
  An \emph{$n+1$-valued propositional dynamic logic} (or simply a
  \emph{logic}) is a subset $\logic{L}$ of $\Form$ that is
closed under the rules of \emph{modus ponens}, uniform substitution and  necessitation (generalization) and
that contains the following axioms:
\begin{enumerate}
\item tautologies of the $n+1$-valued \textsc{\L ukasiewicz} logic;
\item\label{item:defnmodal} for any program $\alpha$,  axioms defining modality $[\alpha]$: 
\vspace{-1em}
\begin{multicols}{2}
\begin{enumerate}
\item $\necess{\alpha} (p\implies q) \implies (\necess{\alpha} p \implies
  \necess{\alpha} q)$,
\item $\necess{\alpha} (p \oplus p) \leftrightarrow
  \necess{\alpha} p \oplus \necess{\alpha} p$,
\item $\necess{\alpha} (p \odot p) \leftrightarrow
  \necess{\alpha} p \odot \necess{\alpha} p$,
\end{enumerate}
\end{multicols}
\vspace{-1em}
\item\label{item:defnprogram} the axioms that define the program operations: for any
  programs $\alpha$ and $\beta$;
\vspace{-1em}
\begin{multicols}{2}
\begin{enumerate}
\item  $\necess{\alpha \cup
    \beta} p \leftrightarrow \necess{\alpha} p \wedge \necess{\beta} p$,
\item $\necess{\alpha ; \beta} p \leftrightarrow \necess{\alpha} \necess{\beta} p$,
\item $[q?] p \leftrightarrow (\neg q^n \vee p )$,
\item $\necess{\alpha^*}p
  \leftrightarrow (p \wedge \necess{\alpha}  \necess{\alpha^*} p)$,
\item $[\alpha^*]p \implies [\alpha^*][\alpha^*]p$,
\end{enumerate}
\end{multicols}
\vspace{-1em}
\item\label{item:defnindu} the induction axiom $\big(p \wedge \necess{\alpha^*}(p \implies
  \necess{\alpha}p)^n\big)\implies \necess{\alpha^*}p$ for any program $\alpha$.
\end{enumerate}
We denote by $\logic{PDL}_n$ the smallest  $n+1$-valued propositional dynamic logic.

As usual, a formula $\phi$ that belongs to a logic $\logic{L}$ is called a
\emph{theorem} of $\logic{L}$ and we often write $\proves \phi$ instead of $\phi \in \logic{PDL}_n$. 
\end{defn}

Note that formulas of item (\ref{item:defnmodal}) of Definition \ref{defn:logic} are tautologies according to items (\ref{item:K}) and (\ref{item:tau_respect}) of Proposition \ref{prop:tauto}. Similarily, formulas in (\ref{item:defnprogram}) and (\ref{item:defnindu}) of Definition \ref{defn:logic} are formulas (\ref{item:kjh01}), (\ref{item:kjh02}), (\ref{item:kjh03}), (\ref{item:kjh03bis}), (\ref{item:kjh04}), (\ref{item:kjh05}) of Proposition \ref{prop:tauto}.  

\begin{rem}\label{rem:tauto_modal}
Note that conditions (1) and (2) and the deduction rules of  Definition \ref{defn:logic} together with deductive completeness for the modal $\lucas_n$-valued logic (see Theorem 6.2 in \cite{Teheux2012}) ensure that if $\psi$ is  a tautology of the modal $\lucas_n$-valued \textsc{\L ukasiewicz} logic and if $\alpha \in \Pi$ then the formula obtained from $\psi$ by substitution of any occurrence of   $\square$ by $[\alpha]$ is a theorem of $ \logic{PDL}_n$.
\end{rem}

Informally, the induction axiom (4) means `if after an undetermined number of executions of $\alpha$ the truth value of $p$ cannot
decrease after a new execution of $\alpha$, then the truth value of $p$ cannot
decrease after any undetermined number of executions of $\alpha$'. Hence, it is a natural
generalization of the induction axiom of $\logic{PDL}$ (which
could not have been adopted without modification according to Example \ref{ex:nonindu}).

Let us introduce some notations in order to comment the axioms $[\alpha](p\oplus p)\leftrightarrow ([\alpha] p \oplus [\alpha] p)$ and $[\alpha](p\odot p)\leftrightarrow ([\alpha] p \odot [\alpha] p)$.
\begin{defn}\label{defn:ghu}
Let $i$ be an element of $\{1, \ldots, n\}$. We denote by $\tau_{i/n}$ a composition (fixed throughout the paper) of the formulas $p\oplus p$ and $p \odot p$ whose interpretation on $\lucas_n$ is defined by $\tau_{i/n}^{{\tiny\lucas}_n}(x)=0$ if $x<\frac{i}{n}$ and $\tau_{i/n}^{{\tiny\lucas}_n}(x)=1$ if $x\geq \frac{i}{n}$ (see \cite{Ost88} for the existence and the construction of such formulas). 

For any $i \in \{0, \ldots, n\}$, we denote by $I_{i/n}$ the formula $\tau_{i/n}\wedge\neg \tau_{(i+1)/{n}}$ (where we set $\tau_{(n+1)/n}=\tau_{0/n}=p\oplus\neg p$).
\end{defn}

Hence, the interpretation on $\lucas_n$ of $I_{i/n}$  is the characteristic function of $\{\frac{i}{n}\}$. The following result is a consequence of deductive completeness for  modal $\lucas_n$-valued \textsc{\L ukasiewicz} logic (see \cite{Teheux2012}).

\begin{fact}
In the definition of $\logic{PDL}_n$, for any $\alpha \in \Pi$, the pair of axioms 
\begin{equation}
\{[\alpha](p\star p)\leftrightarrow ([\alpha] p \star [\alpha] p)\mid \star\in\{\odot, \oplus\}\}
\end{equation}
 can be equivalently replaced by the axioms  
\begin{equation}
\{[\alpha]\tau_{i/n}(p)\leftrightarrow\tau_{i/n}([\alpha]p)\mid i \in\{1, \ldots, n\}\}.
\end{equation}
\end{fact}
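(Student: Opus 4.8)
The plan is to prove both directions of the claimed equivalence inside the deductive system, working modulo the already-established machinery for the modal $\lucas_n$-valued logic. The key tool is Remark \ref{rem:tauto_modal}, which tells us that any tautology of the modal $\lucas_n$-valued logic, with $\square$ replaced by $[\alpha]$, is automatically a theorem of $\logic{PDL}_n$. So the heart of the matter is to show, at the level of the modal $\lucas_n$-valued logic, that the single-variable commutation schemas $\square(p\star p)\leftrightarrow(\square p\star\square p)$ for $\star\in\{\oplus,\odot\}$ and the schemas $\square\tau_{i/n}(p)\leftrightarrow\tau_{i/n}(\square p)$ for $i\in\{1,\dots,n\}$ generate each other over the base logic (conditions (1) and the $\K$-axiom, which is always present). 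Once that modal-level equivalence is in hand, substituting $[\alpha]$ for $\square$ transports everything to $\logic{PDL}_n$ uniformly in $\alpha$, which is exactly what the statement asks.

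For the direction from the $\tau_{i/n}$-schemas to the $\oplus,\odot$-schemas, I would recall that each $\tau_{i/n}$ is by Definition \ref{defn:ghu} a fixed composition of the operations $p\mapsto p\oplus p$ and $p\mapsto p\odot p$, and that its interpretation on $\lucas_n$ is the threshold function at $i/n$. The family $\{\tau_{i/n}^{\lucas_n}\mid i\}$ separates points of $\lucas_n$ (indeed $I_{i/n}$ is the characteristic function of $\{i/n\}$, as already observed just before the Fact), so every unary term function on $\lucas_n$ — in particular $x\mapsto x\oplus x$ and $x\mapsto x\odot x$ — is expressible as a $\lucas_n$-combination of the $\tau_{i/n}(p)$ together with $p$. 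Using that each $\tau_{i/n}$ commutes with $\square$, and that $\square$ already commutes with the lattice operations and the increasing unary connectives via the $\K$-axiom and item (\ref{item:tau_respect}) of Proposition \ref{prop:tauto}, I would assemble the identities $\square(p\oplus p)\leftrightarrow\square p\oplus\square p$ and $\square(p\odot p)\leftrightarrow\square p\odot\square p$ by decomposing both sides into their threshold pieces and matching them level by level. The cleanest way to carry this out is semantically over $\lucas_n$ and then invoke deductive completeness of the modal $\lucas_n$-valued logic (Theorem 6.2 of \cite{Teheux2012}) to pull the verified equations back to theoremhood.

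For the converse direction, from the $\oplus,\odot$-schemas to the $\tau_{i/n}$-schemas, I would proceed by induction on the structure of the fixed term defining $\tau_{i/n}$. Since $\tau_{i/n}$ is built solely from the two operations $q\mapsto q\oplus q$ and $q\mapsto q\odot q$, and each of these two outermost operations is exactly governed by one of the assumed axioms, a straightforward induction shows that $\square$ commutes with the whole composite: at each stage one peels off an outermost $\oplus$-doubling or $\odot$-doubling, applies the corresponding axiom to push $\square$ inside, and appeals to the induction hypothesis on the subterm. Because every connective used is the doubling by $\oplus$ or $\odot$ and its $[0,1]$-interpretation is increasing, Proposition \ref{prop:tauto}(\ref{item:tau_respect}) guarantees that the intermediate substitutions preserve provable equivalence, so the induction goes through cleanly.

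The main obstacle I anticipate is the forward direction (threshold schemas $\Rightarrow$ doubling schemas), because there one must express the non-monotone-looking algebra of $\oplus$ and $\odot$ purely in terms of the threshold functions, and it is tempting but incorrect to manipulate the formulas as if $\square$ distributed over arbitrary connectives. The safe route is to avoid syntactic gymnastics entirely: verify the required unary-term identities as equalities of functions on $\lucas_n$ — using that the $\tau_{i/n}^{\lucas_n}$ determine any element of $\lucas_n$ — and then let completeness of the modal $\lucas_n$-valued logic do the lifting to provability, exactly as the Fact's own phrasing (``a consequence of deductive completeness'') invites. This keeps the argument short and sidesteps the only genuinely delicate point, namely that $\square$ does \emph{not} commute with $\oplus$ or $\odot$ in general, but only with the two specific doublings singled out by the axioms.
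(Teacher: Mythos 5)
Your induction for the direction from the $\oplus,\odot$-schemas to the $\tau_{i/n}$-schemas is essentially sound, with one repair: the intermediate steps should be justified by the replacement rule --- from $\proves \phi\leftrightarrow\psi$ infer $\proves \necess{\alpha}\phi\leftrightarrow\necess{\alpha}\psi$, derivable from the axiom $\necess{\alpha}(p\implies q)\implies(\necess{\alpha}p\implies\necess{\alpha}q)$ and necessitation --- rather than by Proposition \ref{prop:tauto}(\ref{item:tau_respect}), which asserts \emph{validity}, not derivability. The genuine gap is in the other direction. To obtain the $\oplus,\odot$-schemas from the $\tau_{i/n}$-schemas, you verify the identities semantically on $\lucas_n$ and then invoke ``deductive completeness of the modal $\lucas_n$-valued logic'' to pass to provability. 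But that theorem (Theorem 6.2 of \cite{Teheux2012}, as it is used in Remark \ref{rem:tauto_modal}) is completeness of the system axiomatized by conditions (1) and (2) of Definition \ref{defn:logic}, i.e.\ of the system whose modal axioms \emph{are} the doubling schemas; it converts validity into theoremhood \emph{of that system only}. What this direction of the Fact demands is that the doubling schemas be derivable in the system whose modal axioms are the $\tau_{i/n}$-schemas, and ``valid, hence a theorem of the doubling-based system'' says nothing about that: the appeal is circular. (For the same reason you cannot freely use, inside the threshold-based system, the commutation of $[\alpha]$ with the lattice operations or with increasing unary connectives: Proposition \ref{prop:tauto} states that these are tautologies, and whether the threshold-based system proves them is part of what is in question.)

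The repair is to carry out your own ``decompose into threshold pieces and match level by level'' idea syntactically, inside the threshold-based system; no completeness theorem is needed, because every non-modal fact required is an axiom by condition (1) of Definition \ref{defn:logic}. On $\lucas_n$ one has $\tau_{i/n}(x\oplus x)=\tau_{\lceil i/2\rceil/n}(x)$ and $\tau_{i/n}(x\odot x)=\tau_{\lceil(n+i)/2\rceil/n}(x)$, so the corresponding biconditionals are tautologies of the $n+1$-valued \textsc{\L ukasiewicz} logic, hence axioms. Chaining, for each $i$, the instances $\tau_{i/n}(\necess{\alpha}(p\oplus p))\leftrightarrow\necess{\alpha}\tau_{i/n}(p\oplus p)\leftrightarrow\necess{\alpha}\tau_{\lceil i/2\rceil/n}(p)\leftrightarrow\tau_{\lceil i/2\rceil/n}(\necess{\alpha}p)\leftrightarrow\tau_{i/n}(\necess{\alpha}p\oplus\necess{\alpha}p)$ --- using the threshold axioms twice, the replacement rule, and transitivity of $\leftrightarrow$ --- yields $\proves\tau_{i/n}(\necess{\alpha}(p\oplus p))\leftrightarrow\tau_{i/n}(\necess{\alpha}p\oplus\necess{\alpha}p)$ for every $i$. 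Since elements of $\lucas_n$ are separated by their thresholds, the formula $\big(\bigwedge_{i=1}^{n}(\tau_{i/n}(p)\leftrightarrow\tau_{i/n}(q))\big)\implies(p\leftrightarrow q)$ is again a \textsc{\L ukasiewicz} tautology, hence an axiom; instantiate $p:=\necess{\alpha}(p\oplus p)$ and $q:=\necess{\alpha}p\oplus\necess{\alpha}p$, form the conjunction of the $n$ proved biconditionals (a derivable rule of \textsc{\L ukasiewicz} logic), and apply \emph{modus ponens}; argue identically for $\odot$. For comparison, the paper settles the Fact with the bare appeal to completeness; that appeal transparently covers only the direction in which the target system is the doubling-based one (the $\tau_{i/n}$-schemas are tautologies by Proposition \ref{prop:tauto}(\ref{item:tau_respect}), hence theorems of $\logic{PDL}_n$ via Remark \ref{rem:tauto_modal}), which is precisely the direction your induction already handles without completeness; the converse direction needs an argument such as the one above.
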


Hence, informally speaking, the content of the pair of  axioms  $\{[\alpha](p\star p)\leftrightarrow ([\alpha] p \star [\alpha] p)\mid \star\in\{\odot, \oplus\}\}$ is essentially the following.
\begin{quotation}
 For any $i\leq n$,  the truth value of the statement `after any execution of $\alpha$, formula $\phi$ holds' is at least $\frac{i}{n}$ if and only if it holds that `after any execution of $\alpha$ the truth value of $\phi$ is at least $\frac{i}{n}$' .
\end{quotation}

Proposition \ref{prop:tauto} states that the axioms of $\logic{PDL}_n$ are tautologies.
Tautologies are preserved by application of the deduction rules. It follows that any theorem of $\logic{PDL}_n$ is a tautology.

As an illustration of Definition \ref{defn:logic}, we prove that $\logic{PDL}_n$ is closed under a loop invariance rule. We say that a rule of inference is \emph{derivable} in $\logic{PDL}_n$ if its consequence can be obtained from its premises by application of rules and axiom schemes of $\logic{PDL}_n$.

\begin{lem}\label{lem:IND}
For any $\alpha \in \Pi$, the  rule
\[
\mathrm{(LI)}\quad  \inferrule{(\phi \implies [\alpha]\phi)^n}{(\phi \implies [\alpha^*]\phi)}
\]
 is derivable in $\logic{PDL}_n$.
\end{lem}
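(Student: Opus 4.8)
The plan is to derive the loop invariance rule (LI) directly from the induction axiom (item~(\ref{item:defnindu}) of Definition~\ref{defn:logic}), using the boxed implication structure. Suppose we are given the premise $\proves (\phi \implies [\alpha]\phi)^n$. The key observation is that the induction axiom, applied with $p$ replaced by $\phi$, reads
\[
\big(\phi \wedge \necess{\alpha^*}(\phi \implies \necess{\alpha}\phi)^n\big)\implies \necess{\alpha^*}\phi .
\]
So to conclude $\phi \implies \necess{\alpha^*}\phi$, it suffices to show that $\phi$ alone forces the hypothesis of this axiom, \emph{i.e.}, that $\proves \phi \implies \necess{\alpha^*}(\phi \implies \necess{\alpha}\phi)^n$. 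The natural route is to first promote the premise into a theorem prefixed by $[\alpha^*]$.

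First I would apply necessitation for the program $\alpha^*$ to the premise, obtaining $\proves \necess{\alpha^*}(\phi \implies \necess{\alpha}\phi)^n$. Since this is a theorem, \emph{a fortiori} $\proves \phi \implies \necess{\alpha^*}(\phi \implies \necess{\alpha}\phi)^n$ (any theorem is implied by anything, which is derivable in \textsc{\L ukasiewicz} logic). Combining this with the induction axiom displayed above via \emph{modus ponens} on the antecedent $\phi \wedge \necess{\alpha^*}(\phi \implies \necess{\alpha}\phi)^n$ then yields $\proves \phi \implies \necess{\alpha^*}\phi$, which is exactly the consequence of (LI). Here I must be slightly careful about the propositional bookkeeping: from $\proves \chi$ and the induction axiom $(\phi \wedge \chi) \implies \necess{\alpha^*}\phi$, I want $\phi \implies \necess{\alpha^*}\phi$; this is a purely propositional \textsc{\L ukasiewicz} tautology scheme ($\chi$ and $(\phi\wedge\chi)\implies\rho$ give $\phi\implies\rho$), so it is available by item~(1) of Definition~\ref{defn:logic} together with \emph{modus ponens}.

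The only genuinely delicate point concerns the interaction between the necessitation rule and the exponent $n$: I must check that necessitation legitimately distributes over the power $(\cdot)^n$, \emph{i.e.}, that from $\proves \psi^n$ I may pass to $\proves \necess{\alpha^*}(\psi^n)$ directly (which is just necessitation applied to the theorem $\psi^n$), rather than needing to commute $\necess{\alpha^*}$ past the $\odot$-power. In fact no commutation is needed, because the premise is already the single formula $(\phi \implies \necess{\alpha}\phi)^n$ and necessitation is applied to it as a whole. Thus the step is immediate, and the main obstacle reduces to correctly identifying which \textsc{\L ukasiewicz}-propositional manipulations are sanctioned by item~(1) of Definition~\ref{defn:logic}. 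I expect the verification that the final propositional inference is a genuine $\lucas_n$-tautology to be the part requiring the most care, though it is routine.
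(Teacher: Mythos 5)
Your proposal is correct and follows essentially the same route as the paper: apply necessitation with $[\alpha^*]$ to the premise, then combine the resulting theorem with the induction axiom via \textsc{\L ukasiewicz} propositional tautologies and \emph{modus ponens}. The paper merely makes your ``propositional bookkeeping'' explicit, passing through $\proves \phi \implies (\phi \wedge [\alpha^*](\phi\implies[\alpha]\phi)^n)$ via the tautology $p \implies (t \implies (p \wedge t))$ and then using the transitivity tautology $(p\implies q)\implies((q\implies t)\implies(p\implies t))$, which is exactly the standard derivation of the combined rule you invoke.
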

\begin{proof}
Assume that $\proves (\phi \implies [\alpha] \phi)^n$.
Then 
\begin{eqnarray}
& \proves & [\alpha^*](\phi \implies [\alpha] \phi)^n\label{eqn:jsf}\\\
& \proves & [\alpha^*](\phi \implies [\alpha] \phi)^n\implies\big(\phi \implies (\phi \wedge [\alpha^*](\phi \implies [\alpha]\phi)^n)\big)\label{eqn:qdd}
\end{eqnarray}
where (\ref{eqn:jsf}) is obtained by generalization and \ref{eqn:qdd} by the fact that $p \implies (t \implies (p \wedge t))$ is a tautology of the $n+1$-valued \textsc{\L ukasiewicz} logic (and we apply substitution $p:=[\alpha^*](\phi \implies [\alpha] \phi)^n$ and $t:=\phi$). It follows that
\begin{eqnarray}
& \proves &  \phi \implies (\phi \wedge [\alpha^*](\phi \implies [\alpha]\phi)^n)\label{eqn:qdd01}\\
& \proves &  \phi \implies \necess{\alpha^*}\phi\label{eqn:qdd02}
\end{eqnarray}
where (\ref{eqn:qdd01}) is obtained by \emph{modus ponens} and (\ref{eqn:qdd02}) by double \emph{modus ponens} and induction axiom applied to the tautology of the $n+1$-valued \textsc{\L ukasiewicz} logic $(p\implies q)\implies ((q \implies t) \implies (p\implies t))$ with substitution $p:=\phi$,  $q:=\phi \wedge [\alpha^*](\phi \implies [\alpha]\phi)^n$ and $t:=\necess{\alpha^*}\phi$.
\end{proof}
\begin{rem}
We say that a rule of inference $\mathrm{RI}$ is \emph{admissible} in $\logic{PDL}_n$ if the system formed by $\logic{PDL}_n$ and $\mathrm{RI}$ has the same theorems as $\logic{PDL}_n$. Since for any $k \in \omega$ the rule $\phi/\phi^k$ is admissible in \textsc{\L ukasiewicz} $n+1$-valued logic, we can deduce from Lemma \ref{lem:IND} that the rule
\[
\mathrm{(LI^\sharp)}\quad  \inferrule{(\phi \implies [\alpha]\phi)}{(\phi \implies [\alpha^*]\phi)}
\]
is admissible in $\logic{PDL}_n$.
\end{rem}


\section{Deductive Completeness for $\logic{PDL}_n$}\label{sec:compl}
The main result of the paper is Theorem \ref{thm:main} that states that $\logic{PDL}_n$ is complete with respect to the $n+1$-valued \textsc{Kripke} models. To obtain this result, we use the technique of the canonical model. We follow Part II of \cite{Harel2000} to guide us in our constructions and developments.

As in the case of propositional dynamic logic, in the construction of the canonical model for $\logic{PDL}_n$ , the relation associated to a program is not built inductively from the relations associated to its atomic programs. Instead, we directly associate to each $\alpha$ of $\Pi$ a  relation $R_\alpha$ defined in a canonical way. In fact, the inductive rules involving the operators `$;$', `$\cup$' and `$?$' are satisfied in the canonical model, but $R_{\alpha^*}$ may strictly contain the transitive and reflexive closure of $R_\alpha$. We use the technique of filtration to construct $\lucas_n$-valued \textsc{Kripke} models from this canonical model.

\subsection{Filtration lemma}
The canonical model of $\logic{PDL}_n$ will turn out to be non standard in the following sense.
\begin{defn}\label{defn:nonstand}
  A \emph{weak non standard $n+1$-valued \textsc{Kripke} model}
  $\model{M}=\struc{W, R, \Val}$ is given by a nonempty set $W$ a map $R: \Pi
  \rightarrow 2^{W \times W}$ and a valuation map $\Val: W \times \Prop \rightarrow
  \lucas_n$. The valuation map is extended to formulas by way of the rules (\ref{rule:implie}), (\ref{rule:uhjn}) and (\ref{rule:box}) of Definition \ref{defn:model}. If $w \in W$ and $\phi \in \Form$, we write $\model{M}, w \models \phi$ if $\Val(w, \phi)=1$. We write $\model{M}\models \phi$ if $\model{M}, w \models \phi$ for any $w$ in $W$. 

A \emph{non standard $n+1$-valued \textsc{Kripke} model} is a {weak non standard $n+1$-valued \textsc{Kripke} model}
$\model{M}=\struc{W, R, \Val}$ such that for any programs $\alpha$ and $\beta$ and any formula
  $\psi$,
\begin{enumerate}
\item\label{cdt:Rnonstand00} the following identities are satisfied in $\model{M}$:
\begin{enumerate}
\item $R_{\alpha;\beta}=R_\alpha
  \circ R_\beta$,
\item $R_{\alpha \cup \beta}=R_\alpha \cup R_\beta$,
\item $R_{\psi ?}=\{(u,u)\mid\Val(u, \psi)=1\}$;
\end{enumerate}
\item\label{cdt:Rnonstand} the relation $R_{\alpha}^*$ is a
  transitive and reflexive extension of $R_\alpha$;
\item\label{cdt:Rnonstand01} For any $\phi\in \Form$,  $\model{M}\models\{ [\alpha^*]\phi\implies (\phi\wedge [\alpha][\alpha^*]\phi), [\alpha^*]\phi\implies [\alpha^*][\alpha^*]\phi, (\phi \wedge \necess{\alpha^*} (\phi \implies \necess{\alpha}
     \phi)^n) \implies \necess{\alpha^*} \phi \}$.
\end{enumerate}
\end{defn}

Note that in condition (\ref{cdt:Rnonstand}) of the previous definition we allow $R_{\alpha^*}$ to be \emph{any} reflexive and transitive extension of $R_\alpha$.
 
\begin{rem}\label{rem:sound}
Conditions (\ref{cdt:Rnonstand00}) and (\ref{cdt:Rnonstand01}) ensure that  if $\phi$ is a theorem of $\logic{PDL}_n$ then $\model{M} \models \phi$ for any $n+1$-valued non standard \textsc{Kripke} model $\model{M}$ (because axioms and rules of $\logic{PDL}_n$ are sound for non standard \textsc{Kripke} frames).
\end{rem}

Filtration lemmas are usually proved by induction on the subformula relation. In ($n+1$-valued) propositional dynamic logic, the use of induction is  somehow cumbersome because of the interdependence of the definitions of formulas and programs.  We use the \textsc{Fischer~-~Ladner} closure $\fish{\phi}$ of a formula $\phi$
to prove a filtration lemma for $n+1$-valued non standard models. 

To ease readability, proof of the Filtration Lemma is moved in Appendix in which we also recall the definition (Definition \ref{defn:closure}) of the \textsc{Fisher~-~Ladner} closure of a formula (see also \cite{Harel2000}) . 
%

\begin{defn}
  If $\model{M}=\struc{W, R, \Val}$ is a weak $n+1$-valued non standard \textsc{Kripke} model and
  if $\phi$ is a formula then we define the equivalence relation $\equiv_\phi$
  on $W$ by
\begin{equation}
u \equiv_\phi v \quad \mbox{ if } \quad \forall  \psi \in \fish{\phi} \ \Val(u, \psi)=\Val(v,\psi).
\end{equation}

We denote by $[W]_\phi$ (or simply by $[W]$) the quotient of $W$ by
$\equiv_\phi$ and by $[u]_\phi$ (or simply $[u]$) the class of an element $u$
of $W$ for $\equiv_{\phi}$.

Then, for any atomic program $a$ of $\Pi_0$ we define the relation $R_a^{[\model{M}]_\phi}$  by 
\begin{equation}\label{eqn:filtre01}
R_a^{[\model{M}]_\phi}=\{([u], [v]) \mid (u,v)\in R_a\}\end{equation}
 and the valuation  map $\Val^{[W]}$ on $[W] \times \Prop$ by
\begin{equation}\label{eqn:filtre02}
\Val^{[\model{M}]_\phi}([u],p)=\bigvee \Val([u],p).
\end{equation}

The $n+1$-valued \textsc{Kripke} model $[\model{M}]_\phi=\struc{[W]_\phi, R^{[\model{M}]_\phi}, \Val^{[\model{M}]_\phi}}$ is called \emph{the  filtration of $\model{M}$ through $\phi$.}  If no confusion is possible we prefer to denote this model by $[\model{M}]=\struc{[W], R^{[\model{M}]}, \Val^{[\model{M}]}}$.
\end{defn}
Note that the number of worlds in $[\model{M}]_\phi$ is finite and bounded by $(n+1)^{\card{\fish{\phi}}}$.

The proof of the following result is provided in Appendix \ref{sect:appendix}.

\begin{lem}[Filtration]\label{lem:filtration} Assume that $\model{M}=\struc{W, R, \Val}$ is  an $n+1$-valued non
  standard \textsc{Kripke} model and that $\phi$ is a formula.
  \begin{enumerate}
  \item If $\psi$ is in $\fish{\phi}$ then $\Val(u,\psi)=\Val^{[\model{M}]}([u], \psi)$.
  \item For every $\necess{\alpha} \psi$ in $\fish{\phi}$,
    \begin{enumerate}
    \item if $(u,v) \in R_\alpha$ then $([u],[v]) \in R_\alpha^{[\model{M}]}$;
    \item  if $([u],[v]) \in
    R_\alpha^{[\model{M}]}$ then  $\Val(u, \necess{\alpha} \psi)  \leq \Val(v, \psi)$.
    \end{enumerate}
  \end{enumerate}
\end{lem}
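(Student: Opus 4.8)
The plan is to prove the Filtration Lemma by a single induction on the structure of formulas, exploiting the closure properties of the Fischer-Ladner closure $\fish{\phi}$ so that the induction hypothesis applies to the relevant subformulas and subprograms. The three statements are intertwined, so I would prove them together rather than sequentially: part (1) (preservation of truth values) for a formula $\psi$ relies on parts (2a) and (2b) whenever $\psi$ has the shape $\necess{\alpha}\chi$, and conversely (2b) relies on part (1) holding for the subformula $\chi$.

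First I would handle the base and the propositional cases of part (1). For $\psi = p$ a propositional variable, equation (\ref{eqn:filtre02}) defines $\Val^{[\model{M}]}([u],p) = \bigvee \Val([u],p)$, but since all worlds in a class $[u]$ agree on $p$ (as $p \in \fish{p} \subseteq \fish{\phi}$ when $p$ occurs in $\phi$), this supremum is just $\Val(u,p)$. For $\psi = 0$ it is immediate. For $\psi = \neg\chi$ and $\psi = \chi_1 \implies \chi_2$, the Fischer-Ladner closure contains the immediate subformulas, so the induction hypothesis gives agreement on $\chi$, $\chi_1$, $\chi_2$, and the rules (\ref{rule:implie}) and (\ref{rule:uhjn}) being computed pointwise in $\lucas_n$ propagate the equality.

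The main work is the modal case $\psi = \necess{\alpha}\chi \in \fish{\phi}$, which requires parts (2a) and (2b), and here the induction must be carefully structured on programs as well as formulas. Part (2a) for atomic $\alpha = a$ is immediate from the definition (\ref{eqn:filtre01}) of $R_a^{[\model{M}]}$; for compound programs I would use the non-standard identities of Definition \ref{defn:nonstand}(\ref{cdt:Rnonstand00}) together with the closure of $\fish{\phi}$ under the relevant subprogram/subformula passages (e.g.\ $\necess{\alpha;\beta}\chi \in \fish{\phi}$ forces $\necess{\alpha}\necess{\beta}\chi \in \fish{\phi}$, and $\necess{\psi?}\chi$ forces $\psi \in \fish{\phi}$). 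For part (2b), given $([u],[v]) \in R_\alpha^{[\model{M}]}$ I would show $\Val(u,\necess{\alpha}\chi) \le \Val(v,\chi)$ by reducing along the program structure to the defining rule (\ref{rule:box}), invoking the induction hypothesis part (1) on $\chi$ to pass between $\Val(v,\chi)$ and $\Val^{[\model{M}]}([v],\chi)$ where needed. Once (2a) and (2b) are established, part (1) for $\necess{\alpha}\chi$ follows: (2a) gives $\Val^{[\model{M}]}([u],\necess{\alpha}\chi) \le \Val(u,\necess{\alpha}\chi)$ since every $R_\alpha$-successor of $u$ yields an $R_\alpha^{[\model{M}]}$-successor of $[u]$, and (2b) combined with the induction hypothesis on $\chi$ gives the reverse inequality.

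The hard part, I expect, will be the case $\alpha = \beta^*$ in part (2a), that is, showing $([u],[v]) \in R_{\beta^*}^{[\model{M}]}$ whenever $(u,v) \in R_{\beta^*}$. Since the non-standard model only guarantees (Definition \ref{defn:nonstand}(\ref{cdt:Rnonstand})) that $R_{\beta^*}$ is \emph{some} reflexive and transitive extension of $R_\beta$, one cannot simply unwind a finite $R_\beta$-path. Instead I would need the induction-axiom-style reasoning built into the filtered model: I would verify that the relation $\{([u],[v]) : \Val(u,\necess{\beta^*}\chi) \le \Val(v,\chi) \text{ for the relevant } \chi\}$ (or rather the analogue needed to witness membership in $R_{\beta^*}^{[\model{M}]}$) is a reflexive, transitive extension of $R_\beta^{[\model{M}]}$, using the fixpoint axiom $\necess{\beta^*}\chi \implies (\chi \wedge \necess{\beta}\necess{\beta^*}\chi)$ and the induction axiom from Definition \ref{defn:nonstand}(\ref{cdt:Rnonstand01}), and then conclude by the minimality of the reflexive-transitive closure inside the filtration. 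This closing-off argument, matching finitely many $\equiv_\phi$-classes against the possibly larger $R_{\beta^*}$, is where the many-valued induction axiom (with its $n$-th power) must do its work, and it is the step most likely to require the delicate use of $\tau_{i/n}$ and the completeness of the underlying modal $\lucas_n$-valued logic.
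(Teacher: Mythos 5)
Your overall architecture coincides with the paper's proof: mutual induction over $\fish{\phi}$, the propositional cases handled pointwise, part (1) for $\necess{\alpha}\chi$ derived from (2a), (2b) and the induction hypothesis on $\chi$, and a case split on the shape of $\alpha$ with the union/composition/test cases discharged via the non-standard identities and the closure properties of $\fish{\phi}$. You also correctly single out the case $\alpha=\beta^*$ in (2a) as the crux and correctly observe that one cannot unwind a finite $R_\beta$-path there. But your plan for that crux has a genuine gap: the mechanism you sketch proves the wrong direction. Showing that a relation $S$ defined by truth-value inequalities is a reflexive, transitive extension of $R_\beta^{[\model{M}]}$ and invoking minimality of the reflexive-transitive closure yields only $R_{\beta^*}^{[\model{M}]}\subseteq S$. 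That containment-from-above is exactly the pattern that proves (2b), and it can never establish that a \emph{particular} pair $([u],[v])$, arising from $(u,v)\in R_{\beta^*}$ in the non-standard model, belongs to $R_{\beta^*}^{[\model{M}]}$; since $R_{\beta^*}$ may strictly exceed the reflexive-transitive closure of $R_\beta$, no upper bound on $R_{\beta^*}^{[\model{M}]}$ can force such a membership.

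What is missing is the definability device on which the whole filtration argument rests. The paper first proves (Lemma \ref{lem:psiE}) that every $\equiv_\phi$-saturated subset $E\subseteq W$ is the exact truth set of a single formula $\Psi_E$, built as a finite join over classes of the meets $\bigwedge_{\rho\in\fish{\phi}} I_{\rho,[t]}$, where the $I_{i/n}$ are the characteristic-function formulas; this is where the $\tau_{i/n}$ machinery really enters (not via completeness of the underlying modal $n+1$-valued logic, which plays no role here). Then, for $(u,v)\in R_{\beta^*}$, one takes $E=\{t\in W\mid ([u],[t])\in R_{\beta^*}^{[\model{M}]}\}$, notes $u\in E$ by reflexivity, and shows $E$ is invariant under $R_\beta$-steps using the induction hypothesis (2a) for $\beta$ and transitivity of $R_{\beta^*}^{[\model{M}]}$, so that $\model{M}\models(\Psi_E^n\implies\necess{\beta}\Psi_E^n)^n$. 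The semantic loop-invariance rule for non-standard models (Lemma \ref{lem:lemlemfiltr}, derived from Lemma \ref{lem:IND} and soundness, i.e.\ from the $n$-th-power induction axiom you mention) then gives $\model{M}\models\Psi_E^n\implies\necess{\beta^*}\Psi_E^n$, whence $\Val(u,\necess{\beta^*}\Psi_E^n)=1$, and rule (\ref{rule:box}) applied to $(u,v)\in R_{\beta^*}$ yields $\Val(v,\Psi_E^n)=1$, i.e.\ $v\in E$ and $([u],[v])\in R_{\beta^*}^{[\model{M}]}$. Without Lemma \ref{lem:psiE} (or an equivalent definability statement) there is no formula to feed into the induction axiom, and your ``closing-off'' step cannot be carried out.
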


We obtain the decidability of the satisfiability problem for $\logic{PDL}_n$ as an immediate consequence of Lemma \ref{lem:filtration}.

\begin{defn}\label{defn:satis}
A formula $\phi$ of $\Form$ is \emph{satisfiable} if there is an $n+1$-valued \textsc{Kripke} model and a world in this model in which $\phi$ is true.
\end{defn}

\begin{cor} The problem of deciding if a formula of $\Form$ is satisfiable is decidable.
\end{cor}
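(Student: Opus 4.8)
The plan is to reduce satisfiability to a finite search via the Filtration Lemma. The key observation is that Lemma \ref{lem:filtration} gives us a \emph{small model property}: if a formula $\phi$ is satisfiable at all, then it is satisfiable in a model whose size is bounded by a computable function of $\phi$. Once we have such a bound, decidability follows because we can enumerate all candidate models up to that size and check each one.

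\begin{proof}
Let $\phi$ be a formula of $\Form$. We claim that $\phi$ is satisfiable if and only if $\phi$ is true at some world of an $n+1$-valued \textsc{Kripke} model with at most $(n+1)^{\card{\fish{\phi}}}$ worlds.

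One direction is trivial: a model of the bounded size is in particular an $n+1$-valued \textsc{Kripke} model, so if $\phi$ holds at a world of such a model then $\phi$ is satisfiable by Definition \ref{defn:satis}. For the converse, suppose $\phi$ is satisfiable, say $\model{M}, w \models \phi$ for some $n+1$-valued \textsc{Kripke} model $\model{M}=\struc{W, R, \Val}$ and some $w \in W$. Every $n+1$-valued \textsc{Kripke} model is in particular a (weak) non standard $n+1$-valued \textsc{Kripke} model, since the inductive rules of Definition \ref{defn:model} imply conditions (\ref{cdt:Rnonstand00}), (\ref{cdt:Rnonstand}) and (\ref{cdt:Rnonstand01}) of Definition \ref{defn:nonstand} (indeed rule (\ref{rule:closure}) makes $R_{\alpha^*}$ the reflexive transitive closure of $R_\alpha$, which is a reflexive transitive extension, and the formulas in (\ref{cdt:Rnonstand01}) are tautologies by Proposition \ref{prop:tauto}). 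We may therefore form the filtration $[\model{M}]_\phi=\struc{[W]_\phi, R^{[\model{M}]_\phi}, \Val^{[\model{M}]_\phi}}$. By the first item of Lemma \ref{lem:filtration}, since $\phi \in \fish{\phi}$, we have
\[
\Val^{[\model{M}]_\phi}([w]_\phi, \phi)=\Val(w, \phi)=1,
\]
so $[\model{M}]_\phi, [w]_\phi \models \phi$. As noted after the definition of the filtration, the number of worlds of $[\model{M}]_\phi$ is finite and bounded by $(n+1)^{\card{\fish{\phi}}}$, and $[\model{M}]_\phi$ is an $n+1$-valued \textsc{Kripke} model. This proves the claim.

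It remains to observe that the claim yields a decision procedure. Given $\phi$, one computes the \textsc{Fischer~-~Ladner} closure $\fish{\phi}$, which is finite (see Definition \ref{defn:closure}), and hence the bound $N=(n+1)^{\card{\fish{\phi}}}$. Up to isomorphism there are only finitely many $n+1$-valued \textsc{Kripke} models with at most $N$ worlds in which only the finitely many atomic programs and propositional variables occurring in $\phi$ are interpreted non-trivially; for each such model and each of its worlds the value $\Val(w, \phi)$ is computed in finitely many steps by the rules of Definition \ref{defn:model}. The formula $\phi$ is satisfiable if and only if one of these finitely many evaluations returns the value $1$, and this test is effective.
\end{proof}

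The steps are routine once the Filtration Lemma is in hand; the only point requiring care is checking that an ordinary $n+1$-valued \textsc{Kripke} model qualifies as a non standard model so that the lemma applies, and that the filtration of an honest model is again an honest (standard) model rather than merely a non standard one. The genuine mathematical content, namely the small model property itself, has already been absorbed into Lemma \ref{lem:filtration}, so no further obstacle arises here.
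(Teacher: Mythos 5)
Your proof is correct and follows essentially the same route as the paper: the paper's own argument is exactly the one-line observation that Lemma \ref{lem:filtration} yields satisfiability in a model with at most $(n+1)^{\card{\fish{\phi}}}$ worlds, whence decidability by finite enumeration. You merely make explicit the details the paper leaves implicit (that a standard $n+1$-valued model is non standard in the sense of Definition \ref{defn:nonstand}, that the filtration is again a standard model of bounded size, and that the bounded search is effective), all of which check out.
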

\begin{proof}
If $\phi$ is satisfiable in an $n+1$-valued \textsc{Kripke} model, Lemma  \ref{lem:filtration} ensures that it is satisfiable in a model with at most $(n+1)^{\card{\fish{\phi}}}$ worlds.
\end{proof}




\subsection{The canonical model}
We construct the canonical $n+1$-valued  \textsc{Kri\-pke} model of
$\logic{PDL}_n$ on the set of homomorphisms from the
\textsc{Lindenbaum~-~Tarski} algebra of $\logic{PDL}_n$ to $\lucas_n$. We assume that the
reader has some acquaintance with the theory of MV-algebras which are the
algebras of the many-valued \textsc{\L ukasiewicz} logics. We only recall the necessary definitions. See \cite{Gispert2005} for an introduction or \cite{Cign} for a monograph on the subject. 

Recall that the variety $\var{MV}$ of MV-algebras is generated by the algebra $\struc{[0,1], \implies, \neg, 1}$ where $\neg$ and $\implies$ are defined on $[0,1]$ as their \textsc{\L ukasiewicz} interpretation (see section \ref{sec:model}). $\var{MV}$ can be described as the class of algebras $A=\struc{A,\implies, \neg,1}$ of type $(2,1,0)$ that satisfy the following equations\footnote{This axiomatization is not the most commonly used axiomatization of $\var{MV}$, but it is the most efficient for our purpose. See \cite{Cign} for details.}:
\begin{equation}\label{eqn:axiomMV}
\begin{array}{ll}
x\implies 1=x, & (x\implies y)\implies ((y\implies z)\implies (x\implies z))\!=\!1,\\
(x\implies)\implies y = (y \implies x) \implies x, & (\neg x \implies \neg y)\implies (y \implies x) =1. 
\end{array}
\end{equation}

 The variety $\var{MV}_n$ is the subvariety of $\var{MV}$ generated by the subalgebra $\lucas_n$ of $[0,1]$. We denote by $\var{MV}(A, \lucas_n)$ the set of the MV-algebra homomorphisms from $A$ to $\lucas_n$ for any $A\in \var{MV}_n$. 

In any MV-algebra $A$, the relation $\leq$ defined by $a\leq b$ if  $a\implies b=1$ is a bounded distributive lattice order on $A$. The variety $\var{MV}$ was introduced by \textsc{Chang} (see \cite{Chang1, Chang2}) in order to obtain an algebraic completeness result for \textsc{\L ukasiewicz} infinite-valued logic.

\begin{defn}
We denote by $\free_n$ the \textsc{Lindenbaum~-~Tarski} algebra of $\logic{PDL}_n$, that is, the quotient of $\Form$ by the syntactic equivalence relation $\equiv$ defined by $\phi \equiv \psi$ if $\logic{PDL}_n\proves \phi \leftrightarrow \psi$. This quotient is equipped with  the operations $\implies$, $\neg$ and $[\alpha]$ ($\alpha \in \Pi$) defined in the obvious way: $(\phi/ \equiv) \implies (\psi/ \equiv)=(\phi \implies \psi)/\equiv$, $\neg (\psi/\equiv)=(\neg \psi)/\equiv$ and $[\alpha](\psi/\equiv)=([\alpha]\psi)/\equiv$.
\end{defn}

For the sake of readability, we prefer to denote by $\phi$ the class $\phi/\equiv$.

\begin{lem}\label{lem:IamFree}
 The reduct of $\free_n$ to the language $\{\implies, \neg, 1\}$ belongs to $\var{MV}_n$. 
\end{lem}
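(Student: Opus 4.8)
The goal is to show that the $\{\implies, \neg, 1\}$-reduct of the Lindenbaum--Tarski algebra $\free_n$ belongs to the variety $\var{MV}_n$. Since $\var{MV}_n$ is axiomatized by the MV-equations \eqref{eqn:axiomMV} \emph{together} with the additional identities that cut the infinite-valued variety $\var{MV}$ down to its $n+1$-valued subvariety (the identities valid in $\lucas_n$), the plan is to verify each of these identities for $\free_n$ by reducing it, via the definition of $\equiv$, to a provability statement about $\logic{PDL}_n$. Concretely, an equation $s(\bar x)=t(\bar x)$ in the language $\{\implies,\neg,1\}$ holds in $\free_n$ precisely when $\logic{PDL}_n \proves s(\bar\phi)\leftrightarrow t(\bar\phi)$ for all formulas $\bar\phi$; and because $\equiv$ is defined through a biconditional that is congruence-compatible with $\implies$ and $\neg$ (as stipulated when the operations on $\free_n$ were introduced), the quotient operations are well defined and the problem is genuinely one of provability.

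First I would observe that every defining identity of $\var{MV}_n$, when read as a formula built from $\implies$ and $\neg$, is a tautology of the $n+1$-valued \textsc{\L ukasiewicz} logic: this is exactly the statement that $\lucas_n$ generates $\var{MV}_n$, so an equation holds in $\lucas_n$ iff the corresponding biconditional evaluates to $1$ under every $\lucas_n$-valuation, i.e.\ iff it is an $n+1$-valued tautology. Next, item (1) of Definition \ref{defn:logic} places \emph{all} tautologies of the $n+1$-valued \textsc{\L ukasiewicz} logic among the axioms of any $n+1$-valued propositional dynamic logic, hence among the theorems of the smallest one, $\logic{PDL}_n$. Combining these two observations: for each defining identity $s=t$ of $\var{MV}_n$ and each substitution instance by formulas $\bar\phi \in \Form$, the biconditional $s(\bar\phi)\leftrightarrow t(\bar\phi)$ is an $n+1$-valued tautology and therefore a theorem of $\logic{PDL}_n$, which by the definition of $\equiv$ means $s(\bar\phi)/\!\equiv \ = \ t(\bar\phi)/\!\equiv$ in $\free_n$. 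Thus every defining identity of $\var{MV}_n$ is satisfied in the reduct, giving the membership claim.

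There is one genuine point requiring care, and it is the only real obstacle: one must check that the $\implies$ and $\neg$ on $\free_n$ are \emph{well defined} as quotient operations, i.e.\ that $\equiv$ is a congruence for these connectives. This reduces to verifying that $\phi_1 \equiv \psi_1$ and $\phi_2\equiv\psi_2$ imply $\neg\phi_1\equiv\neg\psi_1$ and $(\phi_1\implies\phi_2)\equiv(\psi_1\implies\psi_2)$, which in turn follows from the provability in $\logic{PDL}_n$ of the standard replacement theorems $(\phi\leftrightarrow\psi)\implies(\neg\phi\leftrightarrow\neg\psi)$ and of the corresponding congruence law for $\implies$ — again $n+1$-valued tautologies, hence theorems by item (1), and closure under \emph{modus ponens} then delivers the congruence property. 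Once congruence is settled, no further computation is needed: the identities of $\var{MV}_n$ transfer to $\free_n$ by the tautology-to-theorem passage described above, completing the proof.
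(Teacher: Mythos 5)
Your proof is correct and takes essentially the same route as the paper's, whose entire argument is the single observation that the tautologies of $n+1$-valued \textsc{\L ukasiewicz} logic are included among the axioms of $\logic{PDL}_n$. You have merely spelled out what the paper leaves implicit: that the identities defining $\var{MV}_n$ are exactly those valid in $\lucas_n$, that each such identity corresponds to an $n+1$-valued tautology via $\leftrightarrow$, that closure under uniform substitution turns these into theorems about arbitrary formulas, and that $\equiv$ is a congruence so the quotient operations are well defined.
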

\begin{proof}
We have included the tautologies of  \textsc{\L ukasiewicz} $n+1$-valued logic in our axiomatization of $\logic{PDL}_n$.
\end{proof}

The preceding lemma leads to the definition of the canonical model for $\logic{PDL}_n$. The classical construction of the canonical model for $\logic{PDL}$ is based on the set of the maximal (Boolean) filters of the \textsc{Lindenbaum~-~Tarski} algebra $\free$ of $\logic{PDL}$. One of the  key element of this construction is a separation result (a consequence of the Ultrafilter Theorem) that states that  for any $\phi, \psi$ such that $\phi, \psi, \phi\leftrightarrow\psi \not\in \logic{PDL}$, there is a maximal filter of $\free$ that contains $\phi/\equiv$ but not $\psi/\equiv$. We can state this result using homomorphisms. Indeed, a subset $F$ of a Boolean algebra $A$ is a (proper) maximal filter if and only if the map $\pi_F:A\rightarrow \bold{2}$ (where $\bold{2}$ denotes the two element Boolean algebra) defined by $\pi_F^{-1}(1)=F$ is an homomorphism. Hence, the separation result can be stated in this way: for any $\phi, \psi$ such that $\phi,\psi, \phi\leftrightarrow\psi \not\in \logic{PDL}$, there is an homomorphism $v:\free\rightarrow \bold{2}$ such that $v(\phi)=1$ and $v(\psi)=0$.

There is an analogous separation result for the variety $\var{MV}_n$: if $A\in \var{MV}_n$ and $a\neq b \in A$, there is an homomorphism $v:A\rightarrow \lucas_n$ such that\footnote{Without going into details, note that this is a consequence of the characterization of subdirectly irreducible elements of $\var{MV}_n$. See \cite{Cign}.} $v(a)\neq v(b)$. This result, together with Lemma \ref{lem:IamFree},  indicates that the set $\var{MV}(\free_n, \lucas_n)$ is a good candidate for the universe of the canonical model of $\logic{PDL}_n$. Before proceeding with the construction of this model, let us recall how  $\var{MV}(A, \lucas_n)$ is linked with the set of maximal filters of $A\in \var{MV}_n$.

A \emph{filter}  of an MV-algebra $A$ is a subset $F$ of $A$  that
contains $1$ and that contains $y$ whenever it contains $x$ and $x \implies
y$. Equivalently, a filter of $A$ is a nonempty increasing subset of $A$ closed under $\odot$. If $X$ is a nonempty subset of an MV-algebra $A$, the filter generated by $X$ is the filter
\begin{equation}
 \struc{X}=\{b \in A \mid \exists k \in \omega, \epsilon \in \omega^k, x \in X^k (b \geq x_1^{\epsilon_1} \odot \cdots \odot x_k^{\epsilon_k}) \}.
\end{equation}
 Filters are ordered by set inclusion and the proper maximal elements are called \emph{maximal filters} and correspond to homomorphisms from $A$ to $\lucas_n$ in the following way. For any maximal filter $F$ of $A\in \var{MV}_n$, there is only one homomorphism  $v_F:A\rightarrow \lucas_n$ that satisfies $v_F^{-1}(1)=F$.
The map $v_{\cdot}:F\mapsto v_F$ has converse $\cdot^{-1}(1):v\mapsto v^{-1}(1)$ that associates a maximal filter for any $v\in\var{A}(A, \lucas_n)$.

\begin{defn}\label{defn:canomod}
  The \emph{canonical model} of $\logic{PDL}_n$ is defined as  the model $\model{M}^{c}=\struc{W^{c}, R^{c}, \Val^{c}}$
where
\begin{enumerate}
\item  $W^{c}= \var{MV}(\free_n,
  \lucas_n)$;
\item if $\alpha \in \Pi$, the relation $R^{c}_\alpha$ is defined as 
\[R^{c}_\alpha=\{(u,v)\mid \forall \phi \in  \free_n      \ \big(u(\necess{\alpha} \phi)=1 \Rightarrow v(\phi)=1\big)\};\]
\item the map $\Val^{c}$ is defined as \[\Val^{c}:W^c \times \Form: (u,\phi)\mapsto u(\phi).\]
\end{enumerate}
When no confusion arises, we prefer to write $W$, $R$ and $\Val$ instead of
$W^{c}$, $R^{c}$ and $\Val^{c}$ respectively.
\end{defn}
\begin{lem}\label{lem:alterative_def}
If $\alpha \in \Pi$, then
\begin{equation}
R_\alpha^c=\{(u,v) \mid \forall \phi \in \free_n (v(\phi)=1\Rightarrow u(\poss{\alpha}\phi)=1)\}.
\end{equation}
\end{lem}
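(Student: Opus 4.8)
The plan is to fix a pair $u,v\in W^{c}=\var{MV}(\free_n,\lucas_n)$ and to show directly that the defining condition of $R^{c}_\alpha$ in Definition \ref{defn:canomod}, namely
\[
(\mathrm{P})\qquad \forall \phi\in\free_n\ \big(u(\necess{\alpha}\phi)=1\Rightarrow v(\phi)=1\big),
\]
is equivalent to
\[
(\mathrm{Q})\qquad \forall \phi\in\free_n\ \big(v(\phi)=1\Rightarrow u(\poss{\alpha}\phi)=1\big).
\]
Two ingredients will do all the work. First, since $u,v$ are MV-homomorphisms into $\lucas_n$ (Lemma \ref{lem:IamFree}), they commute with every term of the MV-language; in particular $w(\tau(\chi))=\tau^{\lucas_n}(w(\chi))$ for any $w\in\{u,v\}$ and any threshold formula $\tau=\tau_{i/n}$ of Definition \ref{defn:ghu}, which is built only from $\neg$ and $\implies$. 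Second, because each $\tau_{i/n}$ has an increasing $[0,1]$-interpretation, item (\ref{item:tau_respect}) of Proposition \ref{prop:tauto} gives $\necess{\alpha}\tau_{i/n}(\chi)=\tau_{i/n}(\necess{\alpha}\chi)$ as elements of $\free_n$; applying $w$ and the first ingredient yields the key identity $w\big(\necess{\alpha}\tau_{i/n}(\chi)\big)=\tau_{i/n}^{\lucas_n}\big(w(\necess{\alpha}\chi)\big)$. Finally, unfolding the abbreviation $\poss{\alpha}\phi=\neg\necess{\alpha}\neg\phi$ shows $u(\poss{\alpha}\phi)=1$ iff $u(\necess{\alpha}\neg\phi)=0$, which lets me pass freely between the two occurrences of $1$ in $(\mathrm{P})$ and $(\mathrm{Q})$.

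For $(\mathrm{P})\Rightarrow(\mathrm{Q})$ I would take $\phi$ with $v(\phi)=1$ and set $c=u(\necess{\alpha}\neg\phi)$, aiming to prove $c=0$. If $c\neq0$ then $c\geq\frac1n$, so the key identity gives $u\big(\necess{\alpha}\tau_{1/n}(\neg\phi)\big)=\tau_{1/n}^{\lucas_n}(c)=1$. Applying $(\mathrm{P})$ to the formula $\tau_{1/n}(\neg\phi)$ forces $v\big(\tau_{1/n}(\neg\phi)\big)=1$, i.e. $v(\neg\phi)\geq\frac1n$, i.e. $v(\phi)\leq\frac{n-1}{n}$, contradicting $v(\phi)=1$. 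Hence $c=0$, that is $u(\poss{\alpha}\phi)=1$, which is $(\mathrm{Q})$.

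For $(\mathrm{Q})\Rightarrow(\mathrm{P})$ I would take $\phi$ with $u(\necess{\alpha}\phi)=1$ and argue by contradiction that $v(\phi)=1$. If $v(\phi)\neq1$ then $v(\neg\phi)\geq\frac1n$, so $v\big(\tau_{1/n}(\neg\phi)\big)=1$; applying $(\mathrm{Q})$ to $\tau_{1/n}(\neg\phi)$ gives $u\big(\poss{\alpha}\tau_{1/n}(\neg\phi)\big)=1$, i.e. $u\big(\necess{\alpha}\,\neg\tau_{1/n}(\neg\phi)\big)=0$. Now $\sigma(p):=\neg\tau_{1/n}(\neg p)$ is again a formula in $\neg,\implies$ with increasing $[0,1]$-interpretation and $\sigma^{\lucas_n}(1)=1$, so the key identity yields $u\big(\necess{\alpha}\sigma(\phi)\big)=\sigma^{\lucas_n}\big(u(\necess{\alpha}\phi)\big)=\sigma^{\lucas_n}(1)=1$, contradicting the value $0$ just obtained. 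Hence $v(\phi)=1$, giving $(\mathrm{P})$, and the two sets coincide.

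The only real subtlety — and the step I would flag as the crux — is that the homomorphisms in $W^{c}$ commute with the propositional connectives but emphatically not with $\necess{\alpha}$, so one cannot simply ``move'' $u$ through the box. What makes the argument go through is precisely that commutation of the \emph{increasing threshold} operators with $\necess{\alpha}$ holds at the level of $\free_n$ (via item (\ref{item:tau_respect})), after which the homomorphism property may be used on the resulting equal elements; combined with the observation that $x=1$ in $\lucas_n$ is captured by a threshold ($\tau_{1/n}(\neg x)=0$, equivalently $\tau_{n/n}(x)=1$), this bridges the exact ``$=1$'' conditions defining the canonical relation.
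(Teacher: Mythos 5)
Your proof is correct and takes essentially the same route as the paper's: both directions hinge on letting the increasing threshold formulas commute with $\necess{\alpha}$ inside $\free_n$ (the paper likewise justifies this by item (\ref{item:tau_respect}) of Proposition \ref{prop:tauto}, with theoremhood in $\logic{PDL}_n$ ultimately backed by the axioms of Definition \ref{defn:logic} and Remark \ref{rem:tauto_modal}), after which one derives a contradiction on the value of $v(\phi)$ exactly as you do. The only differences are cosmetic: you use the single threshold $\tau_{1/n}$ where the paper takes $\tau_{i/n}$ for the exact value $i/n = u(\necess{\alpha}\neg\phi)$, and you spell out the converse direction (via $\sigma(p)=\neg\tau_{1/n}(\neg p)$, i.e.\ the characteristic function of $\{1\}$) which the paper leaves to the reader with ``proceed in a similar way.''
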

\begin{proof}
Assume that $(u,v)\in R_\alpha^{c}$ and that $\phi$ is an element of $\free_n$ such that $v(\phi)=1$. If $u(\poss{\alpha}\phi)<1$  then $u(\necess{\alpha}\neg \phi)=1-u(\poss{\alpha}\phi)>0$.  Let $i$ be the element of $\{0, \ldots, n-1\}$ such that $u(\necess{\alpha}\neg \phi)=\frac{i}{n}$. It follows that $\tau_{i/n}(u([\alpha]\neg \phi))=u([\alpha]\tau_{i/n}(\neg \phi))=1$ and so, that $v(\tau_{i/n}(\neg \phi))=1$. It means that $v(\neg \phi)\geq\frac{i}{n}$ or equivalently that $v(\phi)\leq 1-\frac{i}{n}<1$, a contradiction.

Proceed in a similar way to prove that the condition is sufficient.
\end{proof}

Note that since we have defined an accessibility relation for \emph{every} program $\alpha$ and the image of the valuation maps on  \emph{every} formula $\phi$, it is not clear that the canonical model is an $n+1$-valued (non-standard) model.  Indeed, in (non-standard) models, valuations are defined on atomic objects and inductively extended to all formulas. 

The canonical model will actually turn out to be an $n+1$-valued non standard model. The following lemma is a major step in the proof of this result. The proof of  this lemma is given in more general settings in \cite{Teheux2012}. We include a stand alone proof for the sake of readability.

Note that for any MV-homomorphism $u:\free_n \rightarrow \lucas_n$, the set $[\alpha]^{-1}u^{-1}(1)$ is a filter of $\free_n$ since the formula $\necess{\alpha}(\phi \implies \psi)\implies (\necess{\alpha}\phi \implies \necess{\alpha}\psi)$ belongs to $\logic{PDL}_n$ for any program $\alpha$ and any formulas $\phi$ and $\psi$.

\begin{lem}\label{lem:truth_lemma}
If $\phi \in \Form$, if $\alpha\in \Pi$ and if $u\in W^{c}$ then
\begin{equation}
\Val^{c}(u, \necess{\alpha}\phi)=\bigwedge\{\Val^{c}(v, \phi)\mid v \in R^{c}_{\alpha}u\}.
\end{equation}
\end{lem}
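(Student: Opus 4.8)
The plan is to prove the two inequalities separately. Throughout I write $i/n=\Val^{c}(u,[\alpha]\phi)=u([\alpha]\phi)$, and I exploit the reformulation recalled after Definition \ref{defn:ghu}: the axioms governing $[\alpha]$ are equivalent to the schemes $[\alpha]\tau_{j/n}(p)\leftrightarrow\tau_{j/n}([\alpha]p)$ for $j\in\{1,\ldots,n\}$, so each such equivalence is a theorem of $\logic{PDL}_n$ and hence is collapsed to an identity in $\free_n$. Since every $u\in W^{c}$ is an MV-homomorphism and each $\tau_{j/n}$ is built from $\oplus$ and $\odot$, we have $u(\tau_{j/n}(\xi))=\tau_{j/n}^{\lucas_n}(u(\xi))$ for every $\xi\in\free_n$; in particular $u([\alpha]\tau_{j/n}(\phi))=u(\tau_{j/n}([\alpha]\phi))=\tau_{j/n}^{\lucas_n}(i/n)$.

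First I would show that $u([\alpha]\phi)$ is a lower bound of $\{\Val^{c}(v,\phi)\mid v\in R^{c}_{\alpha}u\}$. Fix $v\in R^{c}_{\alpha}u$; I may assume $i\geq 1$. From $\tau_{i/n}^{\lucas_n}(i/n)=1$ and the identity above I get $u([\alpha]\tau_{i/n}(\phi))=1$, so that $[\alpha]\tau_{i/n}(\phi)$ is sent to $1$ by $u$. By the definition of $R^{c}_{\alpha}$ this forces $v(\tau_{i/n}(\phi))=1$, that is $\tau_{i/n}^{\lucas_n}(v(\phi))=1$, whence $v(\phi)\geq i/n$. Thus $u([\alpha]\phi)\leq\bigwedge\{\Val^{c}(v,\phi)\mid v\in R^{c}_{\alpha}u\}$.

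The converse inequality is the main obstacle, since it requires producing a single world $v\in R^{c}_{\alpha}u$ witnessing the value $i/n$. If $i=n$ there is nothing to prove, so assume $i<n$ and consider the filter $F=[\alpha]^{-1}u^{-1}(1)=\{\psi\in\free_n\mid u([\alpha]\psi)=1\}$, which is a filter of $\free_n$ as noted before the statement. I would first verify that $\tau_{(i+1)/n}(\phi)\notin F$: indeed $u([\alpha]\tau_{(i+1)/n}(\phi))=\tau_{(i+1)/n}^{\lucas_n}(i/n)=0$ because $i/n<(i+1)/n$. Consequently the class of $\tau_{(i+1)/n}(\phi)$ in the quotient $\free_n/F$ is distinct from $1$. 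Since $\var{MV}_n$ is closed under quotients, $\free_n/F\in\var{MV}_n$, and the separation result for $\var{MV}_n$ recalled before Definition \ref{defn:canomod} supplies a homomorphism $w:\free_n/F\rightarrow\lucas_n$ sending that class to a value other than $1$.

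Finally I would set $v=w\circ q$, where $q:\free_n\rightarrow\free_n/F$ is the canonical quotient map. Because $v$ factors through $q$ we have $F\subseteq v^{-1}(1)$, i.e. $u([\alpha]\psi)=1$ implies $v(\psi)=1$ for every $\psi\in\free_n$, so $(u,v)\in R^{c}_{\alpha}$ and $v\in R^{c}_{\alpha}u$. Moreover $\tau_{(i+1)/n}^{\lucas_n}(v(\phi))=v(\tau_{(i+1)/n}(\phi))\neq 1$, which by the defining property of $\tau_{(i+1)/n}$ means $v(\phi)<(i+1)/n$, hence $v(\phi)\leq i/n$. Therefore $\bigwedge\{\Val^{c}(v,\phi)\mid v\in R^{c}_{\alpha}u\}\leq v(\phi)\leq i/n=u([\alpha]\phi)$, and combining with the first inequality yields the equality. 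The one delicate point to keep track of is the bookkeeping with the boundary convention $\tau_{(n+1)/n}=p\oplus\neg p$, which is precisely why the case $i=n$ must be isolated at the outset so that $\tau_{(i+1)/n}$ is a genuine threshold formula.
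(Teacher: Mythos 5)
Your proof is correct, and for the inequality $\leq$ it coincides with the paper's: both use that the threshold formulas commute with $\necess{\alpha}$ up to provable equivalence, so that $u(\necess{\alpha}\tau_{i/n}(\phi))=1$ forces $v(\phi)\geq i/n$ for every $v\in R^{c}_{\alpha}u$. For the inequality $\geq$ your organization is genuinely different. The paper argues \emph{ad absurdum}: if $u(\necess{\alpha}\phi)<\frac{i}{n}\leq\bigwedge\{v(\phi)\mid v\in R^{c}_{\alpha}u\}$, then $\tau_{i/n}(\phi)$ would belong to every maximal filter above $F=[\alpha]^{-1}u^{-1}(1)$ without belonging to $F$ itself, and it simply declares this a contradiction; that declaration tacitly uses that in $\var{MV}_n$ every proper filter is the intersection of the maximal filters containing it, i.e., semisimplicity of the quotient. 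You instead construct the witness directly: since $u(\necess{\alpha}\tau_{(i+1)/n}(\phi))=\tau_{(i+1)/n}^{\mbox{\scriptsize \L}_n}(i/n)=0$, the element $\tau_{(i+1)/n}(\phi)$ is not in $F$, its class in $\free_n/F\in\var{MV}_n$ differs from $1$, the separation result for $\var{MV}_n$ yields a homomorphism $w:\free_n/F\rightarrow\lucas_n$ not sending this class to $1$, and $v=w\circ q$ is then a world of $R^{c}_{\alpha}u$ with $v(\phi)\leq i/n$. Mathematically this is precisely a proof of the fact the paper leaves implicit, so the two arguments have the same algebraic content; what your version buys is full explicitness (including properness of $F$, which your computation establishes for free) and the slightly stronger conclusion that the infimum is actually attained at the constructed world, whereas the paper's version is shorter but presupposes the reader supplies the semisimplicity step. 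Your isolation of the case $i=n$ is also the right bookkeeping, since $\tau_{(n+1)/n}$ is not a genuine threshold formula.
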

\begin{proof}
We have to prove that 
\begin{equation}
u(\necess{\alpha}\phi)=\bigwedge\{v(\phi) \mid v \in R_{\alpha}u\}.
\end{equation}
First, assume that $u(\necess{\alpha}\phi)=\frac{i}{n}$ for some $i \in \{1, \ldots, n\}$. It follows that 
\begin{equation}
1=\tau_{i/n}(u(\necess{\alpha}\phi))=u(\tau_{i/n}(\necess{\alpha}\phi))=u(\necess{\alpha}\tau_{i/n}(\phi)),
\end{equation}
where the first equality is obtained by definition of $\tau_{i/n}$, the second one holds because $u$ is an MV-homomorphism and the last one  from item (\ref{item:tau_respect}) of Proposition \ref{prop:tauto}. Hence, for any $v \in R_{\alpha}u$, we get $v(\tau_{i/n}(\phi))=\tau_{i/n}(v(\phi))=1$, which means that $v(\phi)\geq \frac{i}{n}$. We have proved that
\begin{equation}
u(\necess{\alpha}\phi)\leq\bigwedge\{v(\phi) \mid v \in R_{\alpha}u\}.
\end{equation}
For the other inequality, assume \emph{ad absurdum} that there is an $i\leq n$ such that
\begin{equation}
u(\necess{\alpha}\phi)< \frac{i}{n}\leq\bigwedge\{v(\phi) \mid v \in R_{\alpha}u\}, 
\end{equation}
\emph{i.e.}, such that $u(\necess{\alpha}\tau_{i/n}(\phi))\neq 1$ and $v(\tau_{i/n}(\phi))=1$ for any $v \in R_\alpha u$. Note that the definition of $R_\alpha$ means that the maximal filters above $[\alpha]^{-1}u^{-1}(1)$ are exactly the $v^{-1}(1)$ where $v$ belongs to $R_\alpha u$. Hence, the element $\tau_{i/n}(\phi)$ belongs to any maximal filter that contains  $[\alpha]^{-1}u^{-1}(1)$ but is not an element of $[\alpha]^{-1}u^{-1}(1)$, a contradiction.
\end{proof}

\begin{thm}
  The canonical model of $\logic{PDL}_n$ is an $n+1$-valued non standard \textsc{Kripke}
  model.
\end{thm}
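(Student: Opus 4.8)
The plan is to check, one clause at a time, that $\model{M}^c$ meets Definition~\ref{defn:nonstand}. First I would note that $\model{M}^c$ is a \emph{weak} non standard model. Indeed, since each world $u\in W^c$ is an MV-homomorphism $\free_n\to\lucas_n$ and $\Val^c(u,\cdot)=u(\cdot)$, the clauses (\ref{rule:implie}) and (\ref{rule:uhjn}) hold because $u$ commutes with $\implies$ and $\neg$, while clause (\ref{rule:box}) is precisely the Truth Lemma (Lemma~\ref{lem:truth_lemma}). Dualising that lemma I also record $\Val^c(u,\poss{\alpha}\phi)=\bigvee\{\Val^c(v,\phi)\mid v\in R^c_\alpha u\}$, which I will use freely.

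Next I would establish condition~(\ref{cdt:Rnonstand00}). For union, $R^c_\alpha\cup R^c_\beta\subseteq R^c_{\alpha\cup\beta}$ is immediate from the axiom $\necess{\alpha\cup\beta}p\leftrightarrow\necess{\alpha}p\wedge\necess{\beta}p$; conversely, if $(u,v)$ belonged to $R^c_{\alpha\cup\beta}$ but to neither $R^c_\alpha$ nor $R^c_\beta$, I would pick $\phi,\psi$ with $u(\necess{\alpha}\phi)=1\neq v(\phi)$ and $u(\necess{\beta}\psi)=1\neq v(\psi)$, and—using that $\necess{\alpha},\necess{\beta}$ are monotone (axiom (K) and necessitation)—deduce $u(\necess{\alpha\cup\beta}(\phi\vee\psi))=1$ while $v(\phi\vee\psi)=\max(v(\phi),v(\psi))\neq1$, a contradiction. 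For the test, the axiom $[q?]p\leftrightarrow(\neg q^n\vee p)$ and the fact that in $\lucas_n$ the $n$-fold product $x\odot\cdots\odot x$ equals $1$ exactly when $x=1$ show that $(u,v)\in R^c_{\psi?}$ amounts to $u(\psi)=1$ together with $u^{-1}(1)\subseteq v^{-1}(1)$; maximality of these filters and the filter--homomorphism bijection then force $v=u$, yielding $R^c_{\psi?}=\{(u,u)\mid\Val^c(u,\psi)=1\}$.

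The composition clause is the crux. The inclusion $R^c_\alpha\circ R^c_\beta\subseteq R^c_{\alpha;\beta}$ is a routine chase through $\necess{\alpha;\beta}p\leftrightarrow\necess{\alpha}\necess{\beta}p$; the delicate point—and the main obstacle—is the \emph{existence} direction $R^c_{\alpha;\beta}\subseteq R^c_\alpha\circ R^c_\beta$, where from $(u,w)$ I must manufacture an intermediate world. Here I would set $\Gamma=[\alpha]^{-1}u^{-1}(1)\cup\{\poss{\beta}\psi\mid w(\psi)=1\}$ and prove that the MV-filter $\langle\Gamma\rangle$ of $\free_n$ is proper; the $\var{MV}_n$ separation result then yields a homomorphism $v$ with $v^{-1}(1)\supseteq\langle\Gamma\rangle$, which by construction satisfies $(u,v)\in R^c_\alpha$ and, via Lemma~\ref{lem:alterative_def}, $(v,w)\in R^c_\beta$. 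To see $\langle\Gamma\rangle$ is proper I use that $[\alpha]^{-1}u^{-1}(1)$ is already a filter and that item~(\ref{tem:tauto}) of Proposition~\ref{prop:tauto} lets me absorb all products and $\odot$-powers of the diamonds into a single $\poss{\beta}\psi_0$ with $w(\psi_0)=1$; hence any witness to $0\in\langle\Gamma\rangle$ forces $\phi_0\odot\poss{\beta}\psi_0=0$ for some $\phi_0$ with $u(\necess{\alpha}\phi_0)=1$. But $\phi_0\odot\poss{\beta}\psi_0=0$ is equivalent to $\proves\phi_0\implies\necess{\beta}\neg\psi_0$, so by monotonicity and the composition axiom $\proves\necess{\alpha}\phi_0\implies\necess{\alpha;\beta}\neg\psi_0$; this gives $u(\necess{\alpha;\beta}\neg\psi_0)=1$ and therefore $w(\neg\psi_0)=1$, contradicting $w(\psi_0)=1$.

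Finally, conditions~(\ref{cdt:Rnonstand}) and~(\ref{cdt:Rnonstand01}) are light. Using the defining implication of each canonical relation, reflexivity of $R^c_{\alpha^*}$ comes from $\proves\necess{\alpha^*}\phi\implies\phi$, the inclusion $R^c_\alpha\subseteq R^c_{\alpha^*}$ from $\proves\necess{\alpha^*}\phi\implies\necess{\alpha}\phi$, and transitivity from $\proves\necess{\alpha^*}\phi\implies\necess{\alpha^*}\necess{\alpha^*}\phi$—all derivable from the $*$-axioms of Definition~\ref{defn:logic} by the modal reasoning of Remark~\ref{rem:tauto_modal}. Condition~(\ref{cdt:Rnonstand01}) is then immediate, since the three listed formulas are axioms, hence theorems, so they take value $1$ under every $u\in W^c$ and are valid in $\model{M}^c$.
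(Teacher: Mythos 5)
Your proof is correct, and its skeleton matches the paper's: verify the weak-model clauses via the Truth Lemma, then check union, test, composition, the $*$-clauses, and the validity condition, with composition handled by extending the filter generated by $[\alpha]^{-1}u^{-1}(1)\cup\{\poss{\beta}\psi\mid w(\psi)=1\}$ to a maximal one. The genuine difference is \emph{how you prove that filter proper}, which is indeed the crux. The paper argues semantically: from $(u,w)\in R^c_{\alpha;\beta}$ and Lemma \ref{lem:alterative_def} it gets $u(\poss{\alpha;\beta}\Psi')=1$, then uses the tautology $(\necess{\alpha}\Phi\odot\poss{\alpha}\poss{\beta}\Psi')\implies\poss{\alpha}(\Phi\odot\poss{\beta}\Psi')$ together with the \emph{dual} Truth Lemma to produce an actual world in $R^c_\alpha u$ where $\Phi\odot\poss{\beta}\Psi'$ takes value $1$, so the element is nonzero in $\free_n$. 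You instead argue syntactically by contradiction: after absorbing the diamonds into a single $\poss{\beta}\psi_0$ via item (\ref{tem:tauto}) of Proposition \ref{prop:tauto} (same reduction as the paper's use of that item), you translate $\Phi\odot\poss{\beta}\psi_0=0$ into the theorem $\proves\Phi\implies\necess{\beta}\neg\psi_0$, push it through necessitation, (K) and the composition axiom to $\proves\necess{\alpha}\Phi\implies\necess{\alpha;\beta}\neg\psi_0$, and contradict $(u,w)\in R^c_{\alpha;\beta}$ directly. This buys you something real: you never need the supremum in the dual Truth Lemma to be attained by a witness world (you only use Lemma \ref{lem:alterative_def} at the very end, to read off $(v,w)\in R^c_\beta$ from the constructed homomorphism), whereas the paper leans on that attainment; the paper's route, conversely, makes the nonzero element explicit rather than arguing by refutation. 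Your treatments of union, test, and the $*$-relation are streamlined but substantively identical to the paper's (e.g.\ for $R^c_\alpha\subseteq R^c_{\alpha^*}$ you chain through the derived theorem $\necess{\alpha^*}\phi\implies\necess{\alpha}\phi$ where the paper unfolds the fixpoint axiom at both $u$ and $v$), and your explicit check that $\model{M}^c$ is a weak model via Lemma \ref{lem:truth_lemma} makes precise a point the paper only flags in the surrounding text.
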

\begin{proof}
 We prove the following properties of the canonical model.
\begin{enumerate}
\item $R_{[\alpha \cup \beta]}=R_\alpha \cup R_\beta$,
\item $R_{\alpha;\beta}=R_\alpha \circ R_\beta$,
\item $R_{\psi?}=\{(u,u) \mid \Val(u, \psi)=1\}$,
\item $R_{\alpha^*}$ is reflexive, transitive and contains $R_\alpha$,
\item $\model{M}\models \{[\alpha^*]\phi \leftrightarrow (\phi \wedge [\alpha][\alpha^*] \phi), (\phi \wedge [\alpha^*](\phi \implies [\alpha]\phi)^n) \implies [\alpha^*] \phi, [\alpha]^* \phi \implies [\alpha^*][\alpha^*]\phi\}$.
\end{enumerate}

For (1), we note that the inequality $R_\alpha \cup R_\beta\subseteq R_{\alpha \cup \beta}$ is trivial. For the other inequality, let us assume that $(u,v)$ belongs to $R_{\alpha \cup \beta}$ but not to $R_\alpha \cup R_\beta$. There are formulas $\phi$ and $\psi$  such that $\Val(u,\necess{\alpha} \phi)=1$,  $\Val(u,\necess{\beta} \psi)=1$ and $\Val(v, \phi \vee \psi)<1$. Then, thanks to Lemma \ref{lem:truth_lemma}, 
\begin{equation}
\Val(u, [\gamma] (\phi \vee \psi)) \geq \Val(u, [\gamma] \phi \vee [\gamma] \psi))=1,
\end{equation}
for $\gamma \in \{\alpha,\beta\}$.
Hence, 
\begin{equation}
\Val(u,\necess{\alpha \cup \beta}(\phi \vee \psi))=\Val(u, [\alpha](\phi \vee \psi) \wedge  [\beta](\phi \vee \psi) )=1
\end{equation}
while $\Val(v, \phi \vee \psi)<1$. We conclude that $(u,v)$ does not belong to $R_{\alpha \cup \beta}$, a contradiction.

The inequality $R_{\alpha}\circ R_{\beta} \subseteq R_{\alpha;\beta}$ of (2) is clear. Let us prove the other inequality. Assume that $(u,v)\in R_{\alpha;\beta}$. We prove that the filter generated by $[\alpha]^{-1}u^{-1}(1) \cup \poss{\beta}v^{-1}(1)$ is a proper filter of $\free_n$. Assume that $\phi_1, \ldots, \phi_k$ belong to $[\alpha]^{-1}u^{-1}(1)$, that $\psi_1, \ldots, \psi_l$ belong to $v^{-1}(1)$ and that $\epsilon_1, \ldots, \epsilon_k, \eta_1, \ldots, \eta_l$ are nonnegative integers. We prove that 
\begin{equation}
\Phi \odot \Psi \neq 0
\end{equation}
where $\Phi$ denotes the formula $\phi_1^{\epsilon_1} \odot \cdots \odot \phi_k^{\epsilon_k}$ and $\Psi$ the formula $(\poss{\beta}\psi_1)^{\eta_1} \odot \cdots \odot (\poss{\beta}\psi_l)^{\eta_l}$. 

Let us denote by $\Psi'$ the formula $\psi_1^{\eta_1}\odot \cdots \odot\psi_l^{\eta_l}$.
Since $(u, v)$ belongs to $R_{\alpha;\beta}$ and $v(\Psi')=1$ we obtain thanks to Lemma \ref{lem:alterative_def} that $u(\poss{\alpha;\beta}\Psi')=1$. It follows that $u(\necess{\alpha}\Phi \odot \poss{\alpha}\poss{\beta}\Psi')=1$ and hence, according to Lemma \ref{prop:tauto} (16)
that $u(\poss{\alpha}(\Phi \odot \poss{\beta} \Psi'))=1$. Then, according to Lemma \ref{lem:truth_lemma} and Remark \ref{rem:tauto_modal},
\begin{equation}
u(\poss{\alpha}(\Phi \odot \poss{\beta} \Psi'))=\bigvee\{w(\Phi \odot \poss{\beta} \Psi') \mid w \in R_\alpha u\}.
\end{equation}

 Hence, there is a $w$ in $R_\alpha u$ such that $w(\Phi \odot \poss{\beta} \Psi')=1$ which proves that $\Phi \odot \poss{\beta} \Psi'\neq 0$ in $\free_n$. It follows from Lemma \ref{prop:tauto} (\ref{tem:tauto})
 that $(\Phi \odot \poss{\beta} \Psi') \implies (\Phi \odot \Psi)$ is a theorem of $\logic{PDL_n}$ wich implies that 
\begin{equation}
\Phi \odot \Psi \geq \Phi \odot \poss{\beta} \Psi' >0
\end{equation}
in $\free_n$
 which is the desired conclusion.

(3) Thanks to axiom $[q?]p \leftrightarrow \neg q^n \vee p$ and the rule of uniform substitution, we obtain that $(u,v) \in R_{\psi?}$ if either $u(\psi)<1$ and $v(\phi)=1$ for any $\phi \in  \free_n$ (which is impossible since $v^{-1}(1)$ is a proper filter of $\free_n$) or $u(\psi)=1$ and $u^{-1}(1) \subseteq v^{-1}(1)$, which means that $v=u$ by maximality.

(4) Let us prove that $R_\alpha \subseteq R_{\alpha^*}$. Assume that $(u,v)\in R_\alpha$ and that $u([\alpha^ *]\phi)=1$ for some $\phi \in \free_n$. Thanks to the axioms that define the operator $*$, it means that $u(\phi \wedge [\alpha][\alpha^*]\phi)=1$. It follows that $u([\alpha][\alpha^*]\phi)=1$. Since $(u,v) \in R_\alpha$ we deduce that $v([\alpha^*]\phi)=1$, hence that $v(\phi \wedge [\alpha][\alpha^*]\phi)=1$ and finaly that $v(\phi)=1$.

Eventually, reflexivity and transitivity of $R_{\alpha^*}$ are easily obtained. 

(5) is obtained by construction.
\end{proof}

\begin{thm}\label{thm:main}
  The logic $\logic{PDL}_n$ is complete with respect to the $n+1$-valued
  \textsc{Kripke} models.
\end{thm}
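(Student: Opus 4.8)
The plan is to prove the two inclusions separately, one of which is already available. Soundness is in hand: by Proposition~\ref{prop:tauto} the axioms of $\logic{PDL}_n$ are tautologies, and (as noted after that proposition) the deduction rules preserve tautologies, so every theorem of $\logic{PDL}_n$ is true in every $n+1$-valued \textsc{Kripke} model. It therefore remains to establish the converse, that every tautology is a theorem, which I would do by contraposition. Concretely, starting from a formula $\phi$ with $\not\proves\phi$, I would manufacture a genuine \emph{standard} $n+1$-valued \textsc{Kripke} model together with a world in which $\phi$ is not true, thereby witnessing that $\phi$ is not a tautology. The two ingredients are the canonical model $\model{M}^c$, which the preceding theorem shows to be an $n+1$-valued non standard model, and the Filtration Lemma, which turns it into a standard one while preserving the relevant truth values.

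First I would pass to the \textsc{Lindenbaum~-~Tarski} algebra. Since $\not\proves\phi$, the class of $\phi$ in $\free_n$ differs from the top element $1$. Applying the separation result recorded for $\var{MV}_n$ (every pair of distinct elements of an algebra in $\var{MV}_n$ is separated by a homomorphism into $\lucas_n$) to the pair $\phi$ and $1$ yields a homomorphism $u\in\var{MV}(\free_n,\lucas_n)=W^c$ with $u(\phi)\neq 1$, that is, $\Val^c(u,\phi)<1$. Because $\model{M}^c$ is a non standard $n+1$-valued \textsc{Kripke} model, its valuation map agrees with the inductively defined one, so $\phi$ fails to be true at the world $u$ of $\model{M}^c$.

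Next I would filter the canonical model through $\phi$. As $\model{M}^c$ is non standard, Lemma~\ref{lem:filtration} applies to the finite model $[\model{M}^c]_\phi$, which by construction is a bona fide standard $n+1$-valued \textsc{Kripke} model: its starred relations are the reflexive and transitive closures prescribed by Definition~\ref{defn:model}, and its composite relations are built inductively from the atomic ones. Since $\phi\in\fish{\phi}$, part~(1) of the Filtration Lemma gives $\Val^{[\model{M}^c]}([u],\phi)=\Val^c(u,\phi)<1$. Thus $\phi$ is not true at the world $[u]$ of a standard (indeed finite) $n+1$-valued \textsc{Kripke} model, so $\phi$ is not a tautology. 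The contrapositive is exactly the completeness assertion of Theorem~\ref{thm:main}.

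The genuinely hard part has already been isolated into the Filtration Lemma, and in particular into the fact that $[\model{M}^c]_\phi$ is standard: in the canonical model $R^c_{\alpha^*}$ may strictly contain the reflexive and transitive closure of $R^c_\alpha$, yet after filtration the starred relation coincides with that closure while still validating the truth value of every formula in $\fish{\phi}$. This is where the axioms governing $*$ and the induction axiom are consumed, through part~(2) of Lemma~\ref{lem:filtration}. Granting that lemma and the $\var{MV}_n$ separation result, the completeness argument itself reduces to the short chain above.
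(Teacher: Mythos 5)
Your proof is correct and follows essentially the same route as the paper: both arguments combine the canonical model, part~(1) of the Filtration Lemma applied to $[\model{M}^c]_\phi$, and the $\var{MV}_n$ separation property of $\free_n$. The only difference is presentational — you argue by contraposition from $\not\proves\phi$ to a countermodel, while the paper runs the same chain directly (tautology $\Rightarrow$ valid in the filtration $\Rightarrow$ true in $\model{M}^c$ $\Rightarrow$ $\phi\equiv 1$), with the separation result left implicit in the step from ``true at every world of $\model{M}^c$'' to ``$\phi\equiv 1$''.
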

\begin{proof}
  If $\phi$ is a tautology, then $\phi$ is valid in 
  $[\model{M}^{c}]_\phi$ which is an $n+1$-valued \textsc{Kripke} model. It follows from  Lemma
  \ref{lem:filtration} that $\phi$ is true in $\model{M}^{c}$. We thus conclude that $\phi$ is in any maximal filter of
  $\free_{n}$, \emph{i.e.} that $\phi \equiv 1$  and thus that $\phi$ is a theorem of $\logic{PDL}_n$.
\end{proof}

\section{Concluding remarks}\label{sec:concl}


This paper deals with some theoretical issues of a many-valued generalization of $\logic{PDL}$.
We believe that this generalization could reveal to be a valuable tool for analysis of problems arising from various fields such as computer science, epistemic logic or game theory.  We present a few ideas about possible areas in which this new language could be applied or generalized.
\subsection{Distributed algorithms}
 Some of the problems that can be solved  by a distributed or parallel algorithm  could be modeled with the language of $\logic{PDL}_n$. 

Consider as a toy example the problem of encoding a string $w$ of length $2n$ over an alphabet $\Sigma$ into an alphabet $\Sigma'$ using a coding function $c:\Sigma^2 \to \Sigma'^*$. Assume that this task is distributed over two processes $P_1$ and $P_2$ and that $P_1$ starts encoding from the head of the string and $P_2$ starts from its tail. 

Let us consider two propositional variables $p_1$ and $p_2$ that are evaluated at each step $u$ of the algorithm as $\Val(u,p_i)=k_i/n$ where for any $i\in\{1,2\}$, $k_i$ denotes the number of substrings of length $2$ that process $P_i$ has already encoded in step $u$. The algorithm can be modeled by a many-valued \textsc{Kripke} model. It terminates in step $u$ if $\Val(u, p_1\oplus p_2)=1$.

\subsection{Dynamic epistemic logic} In \cite{Benthem2006}, the authors show how to use the language of $\logic{PDL}$ to design a dynamic epistemic logic $\logic{LLC}$ for multi-agent systems that allows to deal with different kinds of information changes (public announcements, subgroup announcements, partial observations\ldots) or factual changes in the state of the world. In their settings, agents are represented by elements of $\Pi_0$ and the program operators '$;$', '$\cup$' and '$^*$' have epistemic interpretations (for example $[a;b]\phi$ is read `agent $a$ knows that agent $b$ knows $\phi$'). Their semantic uses two kind of models: \emph{epistemic models} (which are standard  \textsc{Kripke} models for $\logic{PDL}$) and \emph{update models} used to capture information changes. It also provides rules to update the former with the latter. The generalization of these constructions to a many-valued realm, using the language and the \textsc{Kripke} models introduced in this paper, could help to model situations involving partial or shared knowledge.

\subsection{Modal logic for games} Modal logic turned  out to be a valuable tool to study several kinds of game forms (\cite{Pauly2001}). For example, \textsc{Pauly} introduced in \cite{Pauly2002} a logic, called $\logic{CL}_N$ to reason about effective power in coalitional games (with set of players $N$). A set of outcome states $X$ is effective for a coalition $C\subseteq N$ if the players in $C$ can choose a joint strategy that leads to a state in $X$ no matter which strategies are adopted by the players not belonging to $C$.  $\logic{CL}_N$ is a multi-modal logic that is complete for a class of neighborhood models.  Some of the tools introduced in this paper could be used to set up a generalization of $\logic{CL}_N$ designed to capture the degree with which a coalition $C$ can encompass a fuzzy set of outcome states.


\appendix
\section{Proof of the Filtration Lemma}\label{sect:appendix}
\begin{defn}[\cite{Ladner1979}]\label{defn:closure}
Assume that $X$ is a set of formulas. The \emph{\textsc{Fisher-Ladner} closure} $\fish{X}$ of $X$ is the smallest subset $Y$ of $\Form$ such that
\begin{multicols}{2}
\begin{enumerate}
\item $X \subseteq Y$,
\item $\phi\in Y$ if $\neg\phi \in Y$,
\item $\{\phi,\psi\} \subseteq Y$ if $\phi \implies \psi \in Y$,
\item  $\phi \in Y$ if $[\alpha]\phi \in Y$,
\item $[\alpha][\beta]\phi \in Y$ if $[\alpha;\beta]\phi\in Y$,
\item $\{[\alpha]\phi, [\beta]\phi\}\!\subseteq\!Y$ if $[\alpha\cup\beta]\phi\in Y$,
\item $[\alpha][\alpha^*]\phi\in Y$ if $[\alpha^*]\phi \in Y$,
\item $\{\psi,\phi\} \subseteq Y$ if $[\psi?]\phi \in Y$.
\end{enumerate}
\end{multicols}
\end{defn}

\begin{lem}\label{lem:psiE}
If $\struc{W, R, \Val}$ is a weak  non-standard $n+1$-valued model, if $\phi \in \Form$ and if $E$ is a subset of $W$ which is $\equiv_\phi$-saturated  (\emph{i.e.}, $E$ contains $[u]_\phi$ whenever it contains $u$), then there is a formula $\Psi_E$ such that $E=\Val^{-1}(\cdot,\Psi_E)(1)$.
\end{lem}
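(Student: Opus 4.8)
The plan is to build $\Psi_E$ explicitly as a finite Boolean combination of the formulas $I_{i/n}$ introduced in Definition \ref{defn:ghu}, exploiting the fact that, on $\lucas_n$, each $I_{i/n}$ is the characteristic function of the single value $\frac{i}{n}$. The first thing I would record is that $\fish{\phi}$ is finite; this is the standard property of the \textsc{Fischer~-~Ladner} closure that already underlies the bound $(n+1)^{\card{\fish{\phi}}}$ on $\card{[W]_\phi}$. Consequently $\equiv_\phi$ has only finitely many classes, and all the conjunctions and disjunctions written below are genuinely finite formulas.

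Next I would describe each class by a formula. Fixing a world $u$, I associate to every $\psi \in \fish{\phi}$ the integer $i_\psi \in \{0, \ldots, n\}$ determined by $\Val(u, \psi) = \frac{i_\psi}{n}$, and set
\[
\chi_{[u]} = \bigwedge_{\psi \in \fish{\phi}} I_{i_\psi/n}(\psi).
\]
Since the interpretation of $I_{i/n}$ on $\lucas_n$ is the characteristic function of $\{\frac{i}{n}\}$ and $\wedge$ is interpreted by $\min$, for any $v \in W$ one gets $\Val(v, \chi_{[u]}) = 1$ if and only if $\Val(v, \psi) = \frac{i_\psi}{n} = \Val(u, \psi)$ for every $\psi \in \fish{\phi}$, that is, if and only if $v \equiv_\phi u$. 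Thus $\chi_{[u]}$ is a formula whose interpretation is $\{0,1\}$-valued and whose $1$-set is exactly the class $[u]_\phi$. I would stress that this uses only the propositional rules (\ref{rule:implie}) and (\ref{rule:uhjn}) of Definition \ref{defn:model}, so the argument is valid already in a \emph{weak} non standard model, as required.

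Finally, since $E$ is $\equiv_\phi$-saturated, it is a union of finitely many classes, and I would put
\[
\Psi_E = \bigvee_{[u]_\phi \subseteq E} \chi_{[u]},
\]
with the convention that the empty disjunction is the formula $0$ (this handles $E = \emptyset$ via rule (5) of Definition \ref{defn:model}). Because $\vee$ is interpreted by $\max$ and each $\chi_{[u]}$ is $\{0,1\}$-valued, $\Val(v, \Psi_E) = 1$ holds exactly when $v$ lies in some class contained in $E$, i.e. exactly when $v \in E$. This yields $E = \Val^{-1}(\cdot, \Psi_E)(1)$.

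The only genuinely delicate point is the book-keeping in the first step: one must be sure that $\fish{\phi}$ is finite so that $\chi_{[u]}$ and $\Psi_E$ are honest formulas, and that $\wedge$ and $\vee$ act as $\min$ and $\max$ on $\{0,1\}$, which is what forces the interpretations to remain $\{0,1\}$-valued and pins down the $1$-sets exactly. Everything else is a routine unwinding of the defining property of $I_{i/n}$.
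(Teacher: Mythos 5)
Your proposal is correct and follows essentially the same route as the paper: the paper likewise forms, for each class $[t]$, the conjunction $\bigwedge_{\rho \in \fish{\phi}} I_{i/n}(\rho)$ (with $i$ determined by $\Val$ on $\fish{\phi}$) and then takes the disjunction over the classes contained in $E$. Your write-up is in fact somewhat more careful than the paper's, since you make explicit the finiteness of $\fish{\phi}$, the $\{0,1\}$-valuedness of the conjuncts, the empty-disjunction convention, and the fact that only the propositional valuation rules are needed so the argument applies to weak non-standard models.
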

\begin{proof}
For any $[t]\in [W]$, any $\rho\in \fish{\phi}$ and any $i\in \{0, \ldots, n\}$ such that $\Val([t], \rho)=\frac{i}{n}$, let us denote by $I_{\rho, [t]}$ the formula $I_{\frac{i}{n}}(\rho)$. Then, set
\begin{equation}
\psi_{[t]}=\bigwedge_{\rho \in \fish{\phi}} I_{\rho, [t]}.
\end{equation}
Then $u \in [t]$ if and only if $\Val(u, \psi_{[t]})=1$. The formula
\begin{equation}
\psi_E=\bigvee_{[t] \subseteq E} \psi_{[t]}.
\end{equation}
has the desired property.
\end{proof}

\begin{lem}\label{lem:lemlemfiltr}
Assume that $\model{M}$ is a non-standard \textsc{Kripke} model and that $\phi \in \Form$ and $\alpha \in \Pi$. If $\model{M}\models (\phi \implies \necess{\alpha} \phi)^n$ then $\model{M}\models \phi \implies \necess{\alpha^*}\phi$.
\end{lem}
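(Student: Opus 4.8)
The plan is to mirror the syntactic derivation of the loop-invariance rule (Lemma~\ref{lem:IND}) at the semantic level, exploiting the fact that the induction axiom is built directly into the definition of non standard models through condition~(\ref{cdt:Rnonstand01}) of Definition~\ref{defn:nonstand}. So the proof will be short: rather than reasoning about provability, I will chase truth values in $\model{M}$ and feed them through the semantic induction axiom.

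First I would record a semantic form of necessitation along $\alpha^*$: if a formula $\psi$ satisfies $\model{M}\models\psi$, then $\model{M}\models\necess{\alpha^*}\psi$. This is immediate from rule~(\ref{rule:box}) of Definition~\ref{defn:model}, since for every world $w$ the value $\Val(w,\necess{\alpha^*}\psi)=\bigwedge\{\Val(v,\psi)\mid (w,v)\in R_{\alpha^*}\}$ is a meet of values all equal to $1$. Applying this observation to the hypothesis $\model{M}\models(\phi\implies\necess{\alpha}\phi)^n$ yields
$$\model{M}\models\necess{\alpha^*}(\phi\implies\necess{\alpha}\phi)^n.$$

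Next I would invoke the induction axiom, which holds in $\model{M}$ by condition~(\ref{cdt:Rnonstand01}): for every world $w$ we have
$$\Val\big(w,(\phi\wedge\necess{\alpha^*}(\phi\implies\necess{\alpha}\phi)^n)\implies\necess{\alpha^*}\phi\big)=1,$$
which, by the Łukasiewicz reading of $\implies$, says that $\Val(w,\phi\wedge\necess{\alpha^*}(\phi\implies\necess{\alpha}\phi)^n)\leq\Val(w,\necess{\alpha^*}\phi)$. The previous step gives $\Val(w,\necess{\alpha^*}(\phi\implies\necess{\alpha}\phi)^n)=1$, so the left-hand meet collapses, since $\Val(w,\phi)\wedge 1=\Val(w,\phi)$. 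Hence $\Val(w,\phi)\leq\Val(w,\necess{\alpha^*}\phi)$, i.e. $\Val(w,\phi\implies\necess{\alpha^*}\phi)=1$; as $w$ was arbitrary, $\model{M}\models\phi\implies\necess{\alpha^*}\phi$.

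The argument is essentially routine once the two ingredients — semantic necessitation and the semantic induction axiom — are in place; the only point demanding care is the bookkeeping of Łukasiewicz connectives, namely that $x\implies y=1$ iff $x\leq y$ and that $x\wedge 1=x$, so that the induction axiom can be massaged into the desired inequality. I do not expect a genuine obstacle: the essential content is precisely that the (many-valued) induction axiom was placed among the defining conditions of non standard models exactly so that this semantic loop-invariance statement falls out, and the task reduces to faithfully transcribing the reasoning of Lemma~\ref{lem:IND} into the semantics.
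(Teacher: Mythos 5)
Your proof is correct, and every step checks out: semantic necessitation for $\necess{\alpha^*}$ is valid because the valuation in a (weak) non standard model is extended by rule~(\ref{rule:box}), the induction axiom is available for the given $\phi$ and $\alpha$ by condition~(\ref{cdt:Rnonstand01}) of Definition~\ref{defn:nonstand}, and the \textsc{\L ukasiewicz} facts you use ($x\implies y=1$ iff $x\leq y$, and $x\wedge 1=x$) are exactly what is needed to collapse the antecedent. Your route differs from the paper's in its decomposition, though the two share the same core ingredients. The paper's proof is a two-line appeal to meta-theory: Lemma~\ref{lem:IND} shows that $\phi\implies\necess{\alpha^*}\phi$ is \emph{derivable} from $(\phi\implies\necess{\alpha}\phi)^n$ in $\logic{PDL}_n$, and Remark~\ref{rem:sound} asserts that axioms and rules of $\logic{PDL}_n$ are sound over non standard models, so validity of the premise transfers to the conclusion. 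You instead work entirely at the object level, chasing truth values; this makes your argument self-contained and arguably more careful, since it does not lean on Remark~\ref{rem:sound} (which the paper states without proof, and whose verification would in any case require exactly the kind of checks you perform: that substitution instances of \textsc{\L ukasiewicz} tautologies hold under $\lucas_n$-valued valuations and that \emph{modus ponens} and necessitation preserve validity in non standard models). What the paper's approach buys is brevity and reuse: once soundness is accepted, any derived rule of $\logic{PDL}_n$ — not just loop invariance — automatically yields its semantic counterpart on non standard models, whereas your computation would have to be redone for each such rule.
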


\begin{proof}
Remark (\ref{defn:nonstand}) states that  axioms and rules of $\logic{PDL}_n$ are sound in non-standard \textsc{Kripke} frames. Lemma \ref{lem:IND} states that $\phi \implies \necess{\alpha^*}\phi$ can be obtained from $(\phi \implies \necess{\alpha} \phi)^n$ by applications of axioms and rules  of $\logic{PDL}_n$. If follows that if $\model{M}\models (\phi \implies \necess{\alpha} \phi)^n$ then $\model{M}\models \phi \implies \necess{\alpha^*}\phi$.
\end{proof}

We now provide the proof of the Filtration Lemma. 

\begin{proof}[Proof of Lemma \ref{lem:filtration}]
  The proofs of $(1)$ and $(2)$ are done by mutual induction.

(1) If $\psi \in \Prop$, the result follows directly from the definition of $[\model{M}]$. If $\psi=\rho \implies  \mu$ or $\psi=\neg \rho$, the result follows by applying induction hypothesis to $\rho$ and $\mu$.

If $\psi=[\alpha]\rho \in \fish{\phi}$ then $\rho \in \fish{\phi}$. 
We have to prove that 
\begin{equation}\label{eqn:mlmj}
\Val(u, \necess{\alpha}\rho)=\Val([u], \necess{\alpha}\rho).
\end{equation}
 First, we prove inequality $\leq$. We obtain successively
\begin{eqnarray}
\Val(u, [\alpha]\rho)  & \leq &\bigwedge\{\Val(v,\rho)\mid ([u], [v])\in R_\alpha\}\label{eqn:djsiz}\\
& = & \bigwedge\{\Val([v],\rho)\mid ([u], [v])\in R_\alpha\}\label{eqn:djsix}\\
& = & \Val([u],[\alpha]\rho),
\end{eqnarray}
where (\ref{eqn:djsiz}) and (\ref{eqn:djsix}) are obtained by (2) (b) and  by induction hypothesis for $\rho$.

Now, we prove inequality $\geq$ in (\ref{eqn:mlmj}). From (2)(a) we obtain that
$R_\alpha^{[\model{M}]}$ contains $([u],[v])$ whenever $(u,v)\in R_\alpha$. It follows that
\begin{eqnarray}
\Val([u],[\alpha]\rho) & = &  \bigwedge\{\Val([v],\rho)\mid ([u], [v])\in R_\alpha\}\\ 
 & = & \bigwedge \{\Val(v,\rho) \mid ([u],[v])\in R_\alpha\}\label{eqn:djsix01}\\
 & \leq &  \bigwedge \{\Val(v,\rho)\mid (u,v)\in R_\alpha\}\\
& = & \Val(u, [\alpha]\rho),
\end{eqnarray}
where (\ref{eqn:djsix01}) is obtained by induction hypothesis. Hence, we have proved (\ref{eqn:mlmj}).


(2)
There are five cases to consider according to the form of $\alpha$.

If $\alpha\in \Pi_0$ then, knowing that $[\alpha]\psi$ and $\psi$ are in $\fish{\phi}$, the result follows easily from the definition of $[\model{M}]$.

If $\alpha=\beta \cup \gamma$ then (a) is easily obtained by application of induction hypothesis to $[\alpha]\psi$ and $[\beta]\psi$ and the fact that $R_{[\beta\cup\gamma]}=R_\beta \cup R_\gamma$ in any (non standard) $n+1$-valued \textsc{Kripke} model. For (b), assume that $([u], [v])\in R_{[\beta\cup \gamma]}=R_{[\beta]} \cup R_{[\gamma]}$. We apply induction hypothesis to $[\beta]\psi$ and $[\gamma]\psi$ and we obtain that either $\Val(u,[\beta]\psi)\leq\Val(v,\psi)$ or  $\Val(u,[\gamma]\psi)\leq\Val(v,\psi)$. The result is then obtained thanks to item (\ref{tauto:dgdks}) of Proposition \ref{prop:tauto}.

If $\alpha=\beta;\gamma$, we can proceed in a similar way by application of induction hypothesis to $[\beta][\gamma] \psi, [\gamma] \psi \in \fish{\phi}$. 

If $\alpha=\rho?$ then $\rho \in \fish{\phi}$ and  we obtain by (1)  that $\Val(u, \rho)=\Val([u], \rho)$, which gives a proof of (a). For (b), we note that if $([u],[u])\in R_{\rho?}$ then $1=\Val([u],\rho)=\Val(u,\rho)$ thanks to (1) applied to $\rho\in\fish{\phi}$. It follows that
\begin{equation}
\Val(u,[\rho?]\psi)=\Val(u,\neg\rho^n \vee \psi)=\Val(u, \psi).
\end{equation}


If $\alpha=\beta^*$ then $[\beta][\beta^*]\psi \in \fish{\phi}$ and we can apply the induction hypothesis to $R_\beta$. To prove (a), assume that $(u,v)\in R_{\beta^*}$. Let us consider 
\begin{equation}
E=\{t \in W \mid ([u], [t]) \in R_{\beta^*}\}.
\end{equation}

The set $E$  is clearly $\equiv_\phi$-saturated. By Lemma \ref{lem:psiE}, there exists a formula $\Psi_E$ such that $E=\Val(\cdot, \Psi_E)^{-1}(1)$.

Since 
$R^{\scriptscriptstyle[ \model{M}]}_{\beta^*}$ is a reflexive extension of $R^{\scriptscriptstyle[\model{M}]}_\beta$, it follows that $u$ belongs to $E$. 

Now, assume that $s\in E$ and that $sR_\beta t$. By induction hypothesis, we obtain that $([s],[t])$ is in $R_\beta^{\scriptscriptstyle[\model{M}]}$. Then  $([u],[t])$ is in $R_{\beta^*}^{\scriptscriptstyle[\model{M}]}$ since this relation is a transitive extension of $R_{\beta}^{\scriptscriptstyle[\model{M}]}$. Hence
$
\model{M}\models (\Psi_E^n \implies [\beta] \Psi_E^n)^n.
$
By Lemma \ref{lem:lemlemfiltr} it follows that $\model{M} \models \Psi_E^n \implies [\beta^*]\Psi_E^n$. As $u \in E$, we conclude that $\Val(u, \Psi_E^n)=1$ so that $\Val(u, [\beta^*]\Psi_E^n)=1$ and $\Val(v, \Psi_E^n)=1$ since $(u,v)\in R_{\beta^*}$. Thus, we have proved that  $([u],[v])\in R_{\beta^*}^{[\model{M}]}$.

To prove (b), assume that $([u],[v])\in R_{\beta^*}$. Then, since $R_{\beta^*}^{[\model{M}]}$ is the reflexive and transitive closure of $R_{\beta}$, there are some $[w_i]$ ($i \in \{0, \ldots, m+1\}$) in $[\model{M}]$ such that $([w_i], [w_{i+1}]) \in R_{\beta}$ for any $i\leq m$ and such that $[u]=[w_0]$ and $[v]=[w_{m+1}]$. We obtain by induction hypothesis that for any $i\leq m$,
\begin{equation}
\Val(w_i,[\beta^*]\psi)\leq\Val(w_i, \psi \wedge [\beta][\beta^*]\psi)\leq \Val(w_{i+1},[\beta^*]\psi).
\end{equation}
Hence, we eventually obtain that $\Val(u,[\beta^*]\psi)\leq  \Val(v, \psi)$.
\end{proof}

\bibliographystyle{plain}
\bibliography{BiblioMPDL}
\end{document}